\documentclass[conference]{IEEEtran}

\ifCLASSINFOpdf
\else
\fi
\usepackage{cite}
\usepackage{graphicx}
\usepackage{textcomp}

\usepackage{xspace}
\newcommand{\eg}{\emph{e.g.,}\xspace}
\newcommand{\ie}{\emph{i.e.,}\xspace}

\newcommand{\first}{(i)\xspace}
\newcommand{\second}{(ii)\xspace}
\newcommand{\third}{(iii)\xspace}
\newcommand{\fourth}{(iv)\xspace}

\newcommand\revised[1]{{\color{orange} #1}}
\renewcommand\revised[1]{{#1}}
\newcommand\morerevised[1]{{\color{orange} #1}}
\renewcommand\morerevised[1]{{#1}}

\newcommand{\ours}{\textsc{Shaman}\xspace}

\newcommand{\oursmotto}{\emph{``In ancient times, a shaman guided through unseen perils.''}}

\usepackage[dvipsnames]{xcolor}
\usepackage[hidelinks]{hyperref}

\usepackage{amsmath,amsfonts,amssymb}
\usepackage{mathtools}
\usepackage{bm}
\usepackage{amsthm}
\theoremstyle{definition}
\newtheorem{theorem}{Theorem}
\newtheorem{definition}{Definition}
\newtheorem{lemma}{Lemma}
\usepackage[skip=6pt]{subcaption}
\usepackage{tabu}
\usepackage{threeparttablex}
\usepackage[export]{adjustbox}
\usepackage{booktabs}
\usepackage{multirow}
\usepackage{multicol}
\usepackage{algorithm}
\usepackage[noend]{algpseudocode}
\usepackage{balance}

\usepackage{enumitem}
\usepackage[T1]{fontenc}
\usepackage[utf8]{inputenc}
\usepackage{mathptmx}
\usepackage{pifont}

\usepackage{setspace}
\usepackage{colortbl}
\usepackage{makecell}
\usepackage{diagbox}
\usepackage{arydshln}
\usepackage{tikz}
\usepackage{afterpage}

\newcommand\mypart[1]{\textsl{\footnotesize #1}}

\newcommand*\circled[1]{\tikz[baseline=(char.base)]{
            \node[shape=circle,fill,inner sep=.5pt] (char) {\textcolor{white}{#1}};}}

\usepackage{etoolbox}

\usepackage[available,functional,reproduced]{ndssbadges}

\begin{document}
%
\title{Understanding the Stealthy BGP Hijacking\\Risk in the ROV Era}

\author{
    \IEEEauthorblockN{
        Yihao Chen\IEEEauthorrefmark{4}\IEEEauthorrefmark{2},
        Qi Li\IEEEauthorrefmark{3}\IEEEauthorrefmark{2},
        Ke Xu\IEEEauthorrefmark{5}\IEEEauthorrefmark{2},
        Zhuotao Liu\IEEEauthorrefmark{3}\IEEEauthorrefmark{2},
        Jianping Wu\IEEEauthorrefmark{3}\IEEEauthorrefmark{2},
    }
    \IEEEauthorblockA{
        \IEEEauthorrefmark{4}DCST \& BNRist \& State Key Laboratory of Internet Architecture, Tsinghua University
    }
    \IEEEauthorblockA{
        \IEEEauthorrefmark{3}INSC \& State Key Laboratory of Internet Architecture, Tsinghua University,
        \IEEEauthorrefmark{2}Zhongguancun Laboratory
    }
    \IEEEauthorblockA{
        \IEEEauthorrefmark{5}DCST \& State Key Laboratory of Internet Architecture, Tsinghua University
    }
    \IEEEauthorblockA{
        yh-chen21@mails.tsinghua.edu.cn,
        \{qli01, xuke, zhuotaoliu\}@tsinghua.edu.cn,
        jianping@cernet.edu.cn
    }
}
	

%


\IEEEoverridecommandlockouts
\makeatletter\def\@IEEEpubidpullup{6.5\baselineskip}\makeatother
\IEEEpubid{\parbox{\columnwidth}{
\vspace{-3.96\baselineskip}
		Network and Distributed System Security (NDSS) Symposium 2026\\
		23 - 27 February 2026 , San Diego, CA, USA\\
		ISBN 979-8-9919276-8-0\\  
		https://dx.doi.org/10.14722/ndss.2026.230097\\
		www.ndss-symposium.org
}
\hspace{.8\columnsep}\makebox[\columnwidth]{}}

\maketitle

\begin{abstract}
The partial deployment of Route Origin Validation (ROV) poses an unexpected security threat known as stealthy BGP hijacking, \ie a particularly elusive form of BGP hijacking where malicious routes divert traffic without reaching (and thus alerting) the victims. This risk remains largely unexplored, with neither documented real-world incidents nor systematic characterization available. To bridge this gap, we formalize stealthy BGP hijacking and propose heuristics to discover potential instances through routing table discrepancies. We conduct the first \emph{empirical} study to track and profile stealthy BGP hijacking in the wild, contributing a curated real-world incident dataset and a long-term monitoring service. Inspired by the empirical insights, we further conduct an \emph{analytical} study to exhaustively assess the risk. This requires accurate ROV deployment data, complete Internet-wide routes, and tailored analytical models. To address these challenges, we develop \ours, a BGP route inference framework dedicated to assessing stealthy BGP hijacking risk. \ours consolidates multiple sources to construct an accurate view of ROV deployment, infers complete Internet-wide routes through a highly efficient matrix-based approach, and facilitates statistical risk analysis via a ``victim-target-hijacker'' 3-tuple model. By reducing the time for generating Internet-scale routes from over three months to just 5.22 hours, \ours enables systematic risk assessment across 8.3 billion generated routes under real-world ROV deployment. Our findings reveal a 14.1\% overall success probability for stealthy BGP hijacking, with targeted attacks reaching 99.5\% success in specific cases. Validation against our real-world dataset shows up to 95.9\% incident-level accuracy, demonstrating the fidelity of our analytical results.
\end{abstract}


%
\IEEEpeerreviewmaketitle

\section{Introduction}

The Border Gateway Protocol (BGP) has been known for its security vulnerabilities, with the infamous BGP hijacking being a major threat. To counter this threat, the community has proposed various security enhancements~\cite{lepinski2017bgpsec,oorschot2007interdomain,white2003securing,kent2000secure,mohapatra2013bgp,lepinski2012rfc,bush2014rfc}, among which the Resource Public Key Infrastructure (RPKI) and Route Origin Validation (ROV) show the promise in real-world deployment. ROV-enabled Autonomous Systems (ASes) can retrieve authorized registries managed by RPKI, known as Route Origin Authorizations (ROAs), based on which they can verify the correctness of prefix-origin associations contained in received BGP announcements. As of March 2025, ROAs have covered 57.1\% of globally routable IPv4 prefixes~\cite{nist}, yet the deployment of ROV is relatively limited, with only hundreds to thousands of ASes identified as ROV-enabled~\cite{li2023rovista,cloudflare,apnic}.

Presumably, ROV will remain in partial deployment for a relatively long time, which, in addition to offering incomplete protection as a consequence, results in an unexpected security threat, \ie \emph{highly stealthy BGP hijacking that is effectively invisible from the victim on the control plane}. This new threat, which we refer to as ROV-related stealthy BGP hijacking, or for short, \emph{stealthy hijacking}, occurs when an AS, despite being nominally protected by ROV-enabled ASes from receiving malicious routes, has its traffic silently diverted to a hijacker through legacy ASes along the data plane path. It is particularly insidious because the affected AS remains unaware of malicious routes throughout the hijacking, rendering common control-plane based protections ineffective in practice.

This highlights the unexpected downside of partial ROV deployment, yet the issue remains largely unexplored. No real-world stealthy hijacking incidents have been documented, and a systematic investigation into its prevalence and impacts is still missing. A recent study~\cite{morillo2021rov++} takes a pioneering step towards mitigating stealthy hijacking via proactive rerouting and blackholing. Yet, its mitigation-oriented focus provides limited real-world evidence or heuristics for tracking and profiling the threat (see further  discussions in \S\ref{sec:related-work}).

To bridge this gap, we seek insights from real-world observations. However, the lack of an established definition of stealthy hijacking makes it difficult to identify the threat. To this end, we formalize stealthy BGP hijacking and derive heuristics to discover hijacking instances based on routing table discrepancies observed across vantage points. Our rationale behind is to determine if any AS along legitimate routes can forward traffic to potential hijackers. Using routing tables from RouteViews vantage points, we conduct the first empirical study to track and profile stealthy hijacking in the wild (\S\ref{sec:empirical-study}). We capture 1,393 potential incidents over a two-month window in 2025, and analyze their impacts and causes extensively. We further validate these observations against a broad knowledge base including RPKI, IRR and WHOIS, which results in a curated dataset of 318 high-confidence incidents covering 2,178 routes. This dataset, along with our long-term monitoring service continuing to report real-world incidents, is publicly available at \url{https://yhchen.cn/stealthy-bgp-hijacking}.

A key observation from our empirical study is that the visibility of stealthy hijacking is sensitive to vantage point selection and subject to continual change. This inspires us to further propose comprehensive and deterministic risk assessment through an analytical approach, which, however, faces three key challenges. First, determining the current state of ROV deployment is nontrivial. Unlike ROAs, ROV deployment is not publicly disclosed by default, and existing ROV measurements remain limited in coverage. Second, given the complexity of Internet topology, acquiring complete knowledge of Internet-wide routes to fully assess the risk is challenging. Third, no dedicated analytical model currently exists to characterize stealthy hijacking risk at a fine-grained level, yet such a model is essential for systematic risk assessment.

To address these challenges, we develop \ours, a BGP route inference framework dedicated to assessing stealthy hijacking risk (\S\ref{sec:analytical-approach}). \ours consolidates ROV measurements from multiple sources to ensure an accurate view of current ROV deployment. It utilizes a matrix-based approach to infer complete knowledge of legitimate AS-level routes and potential AS-to-hijacker routes under partial ROV deployment. This approach encodes essential routing information in compact matrices through a unique one-byte encoding scheme, and leverages highly optimized matrix operations for efficient route inference. Combined with a topology compression method, \ours generates routes across \emph{all} ASes within hours, reducing the otherwise months-long runtime of prior art~\cite{brandt2021optimized,furuness2023bgpy,li2023realizing}. It further applies a ``victim-target-hijacker'' three-tuple model to statistically characterize fine-grained hijacking instances, thereby supporting systematic role-based risk analysis.

Through \ours, we conduct comprehensive analytical risk assessment with the current Internet topology, examining 8.3 billion Internet-wide routes across 77,600 ASes (\S\ref{sec:risk-assessment}). Our analysis exhausts ``victim-target-hijacker'' instances that are vulnerable to stealthy hijacking, and characterizes their prevalence, distribution, and topological features, yielding seven key insights. We reveal that the current partial ROV deployment introduces a 14.1\% overall success probability for stealthy hijacking, with targeted attacks reaching 99.5\% success in specific cases. We also evaluate \ours extensively (\S\ref{sec:performance-evaluation}). We show that it reduces the time for Internet-scale route generation from over three months to just 5.22 hours, achieving a 500-fold speedup over existing methods. Validation against our curated dataset demonstrates up to 95.9\% incident-level accuracy, confirming the fidelity of our analytical results. Lastly, we conduct ablation experiments to justify \ours's design choice of consolidating multiple ROV measurement sources, and validate its robustness against input noise.

To summarize, our contributions are four-fold:
\begin{itemize}[nosep]
\item We develop effective heuristics to discover stealthy BGP hijacking instances based on routing table discrepancies.
\item We conduct the first empirical study to track stealthy BGP hijacking in the wild, establishing a curated real-world incident dataset and a long-term monitoring service.
\item Motivated by empirical insights, we design \ours, a framework dedicated to systematic assessment of stealthy BGP hijacking risk, and evaluate it extensively.
\item Through \ours, we assess stealthy BGP hijacking risk in the current Internet thoroughly, deriving seven key insights while achieving 95.9\% incident-level accuracy.
\end{itemize}
\section{Background}
\label{sec:background}

\noindent\textbf{Interdomain Routing and BGP Hijacking.}
As of March 2025, the Internet comprises over 77,600 Autonomous Systems (ASes) in interdomain routing~\cite{potaroo2021}, each identified by a unique AS Number (ASN). These ASes interconnect via the Border Gateway Protocol (BGP), where each AS announces its IP prefixes to neighbors through route announcements. Upon receiving an announcement, an AS extracts the AS-level path destined for a specified prefix, updates its Routing Information Base (RIB), and performs best-route selection. The chosen route is then appended with the AS's own ASN and further propagated to selected neighbors. Both best-route selection and propagation are influenced by the AS's routing policies, \eg predefined route preferences based on business relationships.

BGP lacks native security mechanisms and is vulnerable to misinformation injected by malicious actors. A consequent threat is BGP hijacking, where a malicious AS falsely claims to originate a prefix it does not own. If other ASes accept the forged announcement, they reroute traffic accordingly and divert it to the hijacker. BGP hijacking takes two forms: \emph{exact-prefix hijacking}, where the hijacker announces the exact prefix owned by the legitimate origin, and \emph{sub-prefix hijacking}, where a more specific sub-prefix is announced. The latter is typically more damaging, since BGP routers prioritize the longest-prefix match when forwarding traffic. Real-world BGP hijacking incidents have reported serious consequences, \eg redirecting cryptocurrency funds to attacker-controlled accounts~\cite{kentik}.

\noindent\textbf{RPKI and ROV for BGP Security.}
To counter BGP hijacking, the community proposed various security extensions~\cite{lepinski2017bgpsec,oorschot2007interdomain,kent2000secure}. Among them, Resource Public Key Infrastructure (RPKI)~\cite{lepinski2012rfc} and Route Origin Validation (ROV)~\cite{bush2014rfc} have gained significant real-world adoption. RPKI provides a cryptographic framework for prefix holders to publish Route Origin Authorizations (ROAs), which specify valid origin ASes and maximum allowable prefix lengths. Meanwhile, ROV refers to the operational practice by which ASes retrieve and verify these ROAs to validate BGP announcements, thereby ensuring that  the announcing AS is authorized to advertise a specific prefix. Invalid BGP announcements are typically discarded.

RPKI has gained great traction in recent years, but the actual state of ROV deployment remains nontrivial to know. Existing measurements of ROV deployment are mostly best-effort and limited in coverage, estimating anywhere from several hundred to over three thousand ROV-enabled ASes across the Internet~\cite{li2023rovista,chen2022rov,hlavacek2023keep,cloudflare,apnic}. This incomplete deployment greatly throttles ROV's effectiveness, \eg even with 60\% global ROV adoption, sub-prefix hijacking still succeeds 40\% of the time~\cite{morillo2021rov++}. More critically, partial ROV deployment introduces an unintended security risk, as we highlight next.
\begin{figure*}[t]
\captionsetup[sub]{belowskip=0pt}
\captionsetup{belowskip=-2mm}
\centering
    \subcaptionbox{Impact of partial ROV deployment on BGP hijacking.\label{fig:problem-statement-1}}{\includegraphics[width=.45\linewidth]{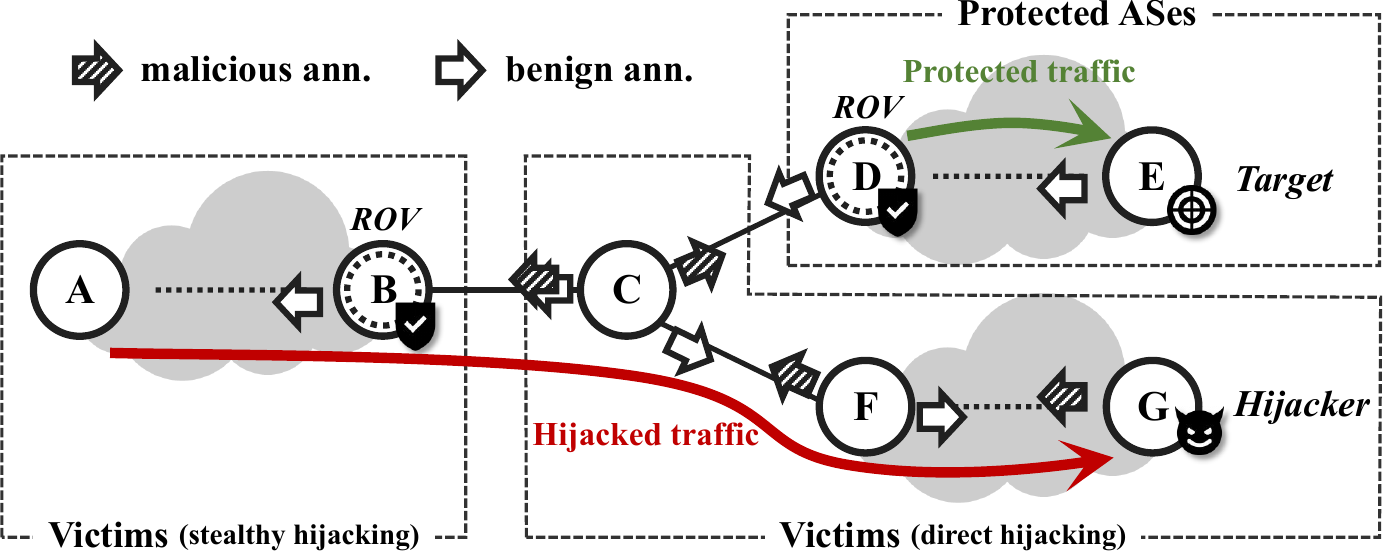}}
    \hspace{.05\linewidth}
    \subcaptionbox{Expected traffic propagation from AS A's view.\label{fig:problem-statement-2}}{\includegraphics[width=.45\linewidth]{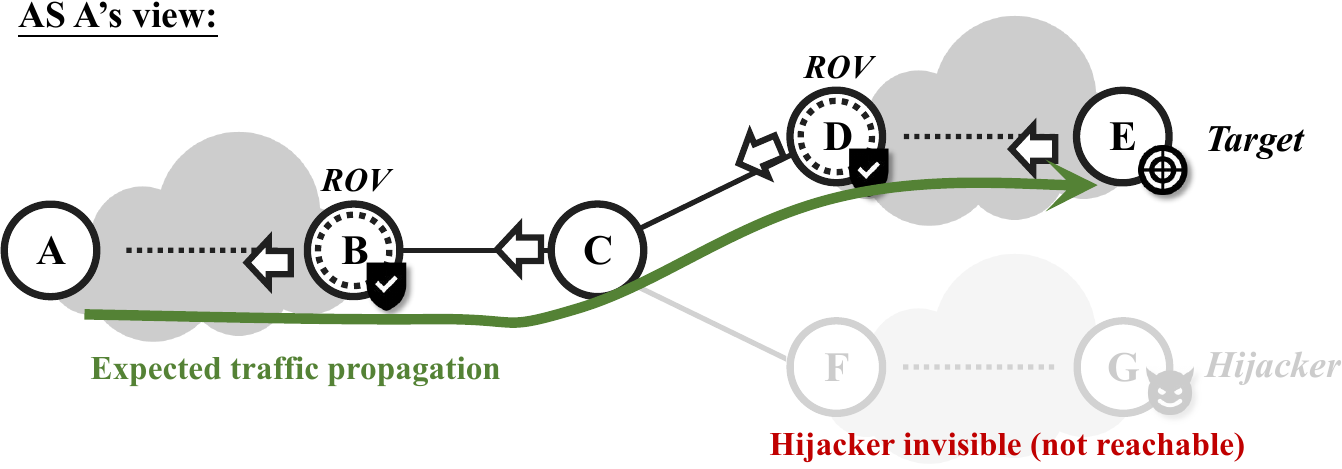}}
    \par\bigskip
    \subcaptionbox{Actual traffic propagation under sub-prefix hijacking.\label{fig:problem-statement-3}}{\includegraphics[width=.45\linewidth]{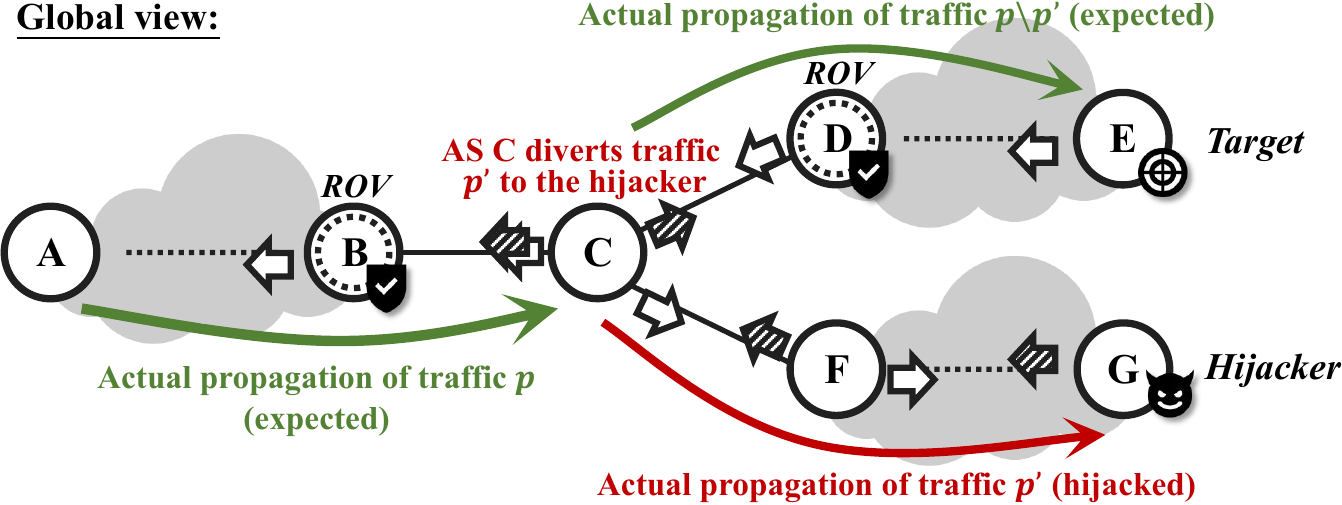}}
    \hspace{.05\linewidth}
    \subcaptionbox{Actual traffic propagation under exact-prefix hijacking.\label{fig:problem-statement-4}}{\includegraphics[width=.45\linewidth]{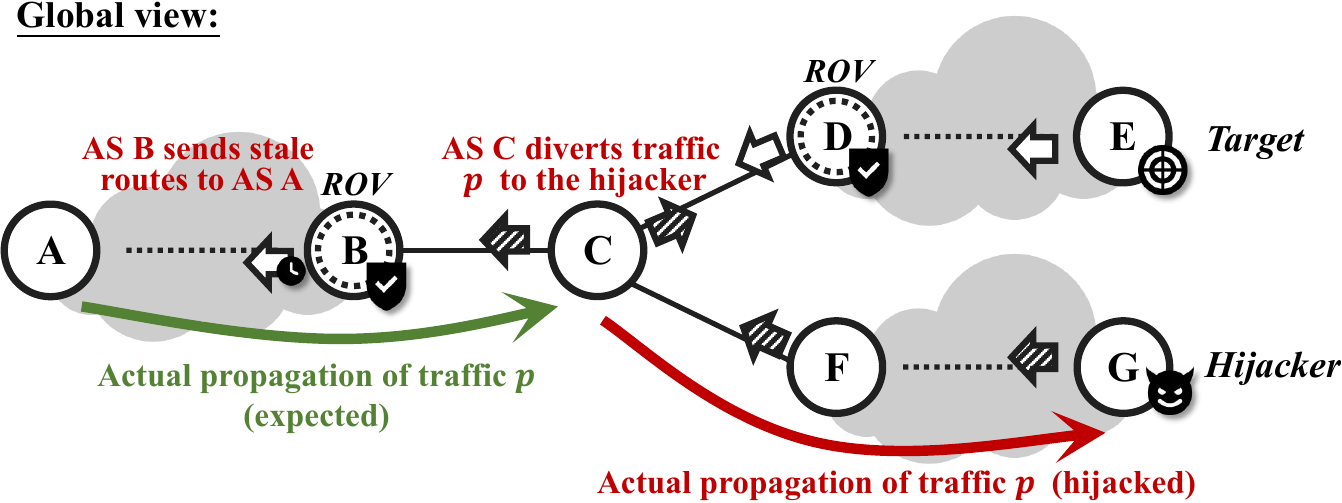}}
\caption{Stealthy hijacking under partial ROV deployment.}
\label{fig:problem-statement}
\end{figure*}

\section{Problem Statement}
\label{sec:problem-statement}

In this section, we articulate the unintended security risk of partial ROV deployment and define the relevant concepts.

\subsection{The Unexpected Downside of ROV}
\label{sec:unexpected-downside-of-rov}

ROV is proposed to prevent BGP hijacking. However, in case of partial deployment, its effectiveness can be undermined. Figure~\ref{fig:problem-statement} illustrates a scenario where BGP hijacking succeeds despite ROV deployment. We abstract the Internet topology as a graph, where vertices represent ASes and edges represent interdomain links. In this scenario, the \emph{hijacker} (AS G) launches an attack by announcing (sub-)prefixes owned by the \emph{target} (AS E). The ROV-enabled ASes block propagation of malicious announcements, so only AS C and F may accept the bogus route to the hijacker\footnote{For brevity, we refer to the interdomain route (accepted by X) for a specific prefix announced by Y as ``the route (from X) to Y''.} depending on routing policies.

We take AS A as an example to explain BGP hijacking in this scenario. As shown in Figure~\ref{fig:problem-statement-1}, AS A only receives routes to the target since ROV-enabled AS B correctly drops the hijacker's malicious announcement. Consequently, AS A lacks a route to the hijacker in its routing table, limiting its control-plane visibility (see Figure~\ref{fig:problem-statement-2}). AS A, unaware of the hijacker's presence, expects its traffic to reach the target correctly (green arrow). However, from a global perspective (see Figure~\ref{fig:problem-statement-3} and \ref{fig:problem-statement-4}), AS A's traffic traverses legacy AS C en route to the target. If AS C accepts the hijacker's route (\eg when the bogus route is preferred over the legitimate one or targets a sub-prefix), the traffic from AS A is actually forwarded to the hijacker. Note that AS A cannot observe the actual traffic forwarding unless it performs active data-plane probing or gains a broader view from external vantage points.

The above example showcases the unexpected downside of partial ROV deployment: 
\emph{it prevents certain ASes from observing bogus routes, contributing to the stealthiness of BGP hijacking attacks.} We refer to such BGP hijacking, which is invisible to certain victims on the control plane, as stealthy hijacking. 
In general, an AS is at risk of stealthy hijacking when it has no route to the hijacker, but at least one legacy AS along the legitimate path accepts the bogus route; this applies to both exact-prefix and sub-prefix hijacking. Based on this principle, we formally define the risk in the next section.

\subsection{Problem Formulation}

Now, we formalize stealthy hijacking risk introduced by partial ROV deployment, starting from the Internet topology:

\begin{definition}[Internet Topology]\label{def:internet-topology}
    An Internet topology is a tuple $G=(V,E,V_{rov})$ where $V$ is the set of all ASes in the Internet, $E$ is the set of links between ASes in $V$, and $V_{rov} \subseteq V$ is the set of ASes that adopt ROV.
\end{definition}

\noindent
The Internet topology exhibits partial ROV deployment if and only if $V_{rov} \notin \{V, \emptyset\}$. We then define the AS-level route:

\begin{definition}[AS-Level Route]\label{def:as-level-route}
    Given $G=(V,E,V_{rov})$, let $u,v \in V$ and let $p$ be a prefix announced by $v$. $R_G(u,v;p)$ denotes the AS-level route starting from $u$ to reach the prefix $p$ announced by $v$ after Internet routing converges on $G$ under existing routes, ROAs, and BGP policies. If it exists, $R_G(u,v;p) = (a_0, \dots, a_l)$, where $a_0=u, a_l=v, a_0, \dots, a_l \in V, (a_i, a_{i+1}) \in E \text{ for } 0 \le i \le l-1$; otherwise, $R_G(u,v;p) = ()$.
\end{definition}

\noindent
Based on them, we can define the stealthy hijacking risk:

\begin{definition}[Stealthy Hijacking Risk]\label{def:stealthy-hijacking-risk}
    Given $G=(V,E,V_{rov})$, let $u,v,w \in V$ and let $p$ be a prefix announced by $v$. The route
    $R_G(u,v;p)$ is at risk of stealthy hijacking by $w$ if there exists a prefix $p'$, such that \first $p' = p$ or $p'$ is a sub-prefix of $p$, \second $R_G(u,w;p') = ()$, \third $R_G(u,v;p) \neq ()$, and \fourth there exists $a_i \in R_G(u,v;p)$, such that $a_i \neq u, a_i \neq v$, and $R_G(a_i, w;p') \neq ()$.
\end{definition}

\noindent
Intuitively, given victim $u$, target $v$, and hijacker $w$, stealthy hijacking occurs when $u$ cannot reach $w$, yet on $u$'s route to $v$, an AS $a_i$ has a route to $w$. Thus, traffic originated from $u$ follows the route towards $v$ until it reaches $a_i$, who instead forwards the traffic to $w$. We call the ``victim-target-hijacker'' 3-tuple $(u, v, w)$ a \emph{stealthy hijacking instance}, and $a_i$ the \emph{risk-critical AS}. Each instance corresponds to a risk-critical AS; for example, in Figure~\ref{fig:problem-statement}, AS C is the risk-critical AS of the instance (A, E, G). Note that each instance only reflects the risk, meaning that a potential stealthy hijacking attack \emph{could} occur, but does not necessarily imply an actual occurrence. We base our risk analysis on the 3-tuple model, and refer to BGP hijacking that is not stealthy hijacking as \emph{direct hijacking}.

Notably, stealthy hijacking follows the established BGP hijacking attack model but manifests in a more subtle form under partial ROV deployment. Any stealthy hijacking due to partial ROV deployment would also be possible (though not necessarily stealthy) without ROV, because removing ROV only improves attacker reachability without making any benign route more preferred. Thus, regardless of ROV deployment, victims that are subject to hijacking would continue to prefer routes with risk-critical ASes. We emphasize that the shift from direct to stealthy hijacking is a byproduct of partial ROV deployment in the current Internet ecosystem, rather than a flaw in ROV itself. However, stealthy hijacking enables attacks that can evade existing control-plane defenses~\cite{chen2024learning,dong2021isp,shapira2022ap2vec} and hinder post-attack forensics~\cite{testart2019profiling}. Understanding its real-world prevalence and associated risk is thus critical to BGP security.
\begin{table}[t]

\newcommand\typeA[1]{{\cellcolor{teal!35} #1}}
\newcommand\typeB[1]{{\cellcolor{gray!20} #1}}
\newcommand\sampleA{{\protect\tikz[x=3ex,y=1.5ex] \protect\fill [teal!35] (0,0) rectangle (1,1);}\ }
\newcommand\sampleB{{\protect\tikz[x=3ex,y=1.5ex] \protect\fill [gray!20] (0,0) rectangle (1,1);}\ }
    \caption{Tags for predefined incident behaviors.}
    \label{tab:incident-tags}
    \scriptsize
    \centering
\begin{ThreePartTable}
    \begin{tabular}{l l l}
    \toprule
    \textbf{Tag\textsuperscript{1}} & \textbf{Definition\textsuperscript{2}} & \textbf{Data Source}\\
    \midrule
    \typeA{Origin Relay} & There exists $M_2$ such that $M_2=O_1$. & Self-contained\\
    \typeA{Origin AS-Set} & $O_2$ is in the form of AS-set. & Self-contained\\
    \typeA{Origin Related} & $O_1$ and $O_2$ have a business relationship. & CAIDA~\cite{caida_as_relationship}\\
    \typeA{Private ASN} & The ASN of $O_2$ is reserved for private use. & IANA~\cite{mitchell2013autonomous}\\
    \typeA{Similar Name} & $O_1$ and $O_2$ have similar\textsuperscript{3} organization names. & CAIDA~\cite{caida_as_organization}\\
    \typeB{Direct View} & There exists $M_1$ such that $M_1=V_2$. & Self-contained\\
    \typeB{Country Diff} & $O_1$ and $O_2$ are located in different countries. & CAIDA~\cite{caida_as_organization}\\
    \bottomrule
    \end{tabular}
\begin{tablenotes}
\scriptsize
\item[1] Tags in \sampleA indicate route engineering practices, while tags in \sampleB are informational.
\item[2] The notations are the same as in \S\ref{sec:heuristics}.
\item[3] Two strings are deemed similar if their fuzz partial ratio score is greater than 90.
\end{tablenotes}
\end{ThreePartTable}
\end{table}

\begin{figure}[t]
\captionsetup{belowskip=-2mm}
    \centering
    \includegraphics[width=\linewidth]{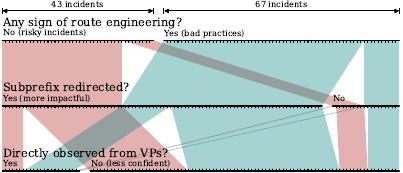}
    \caption{Breakdown of the unique incident set.}
    \label{fig:incidents-breakdown}
\end{figure}

\begin{table}[t]
\definecolor{firebrick}{rgb}{0.698,0.133,0.133}
\newcommand\typeA[1]{{\cellcolor{firebrick!35} #1}}
\newcommand\typeB[1]{{\cellcolor{teal!35} #1}}
\newcommand\typeC[1]{{\cellcolor{white!35} #1}}
    \caption{Overall impact of the incidents.}
    \label{tab:incident-impact}
    \scriptsize
    \centering
\resizebox{\linewidth}{!}{
    \begin{tabular}{c c c c c c}
    \toprule
    \textbf{Type}&\textbf{\#Countries}&\textbf{\#Prefixes}&\textbf{\#Origins}&\textbf{\#Routes}&\textbf{\#VPs} \\
    \midrule
    Risky incidents&\typeA{16}&\typeA{60}&\typeA{36}&\typeA{773}&\typeA{48} \\
    Bad practices&\typeB{24}&\typeB{103}&\typeB{43}&\typeB{3,611}&\typeB{50} \\
    Total&\typeC{31}&\typeC{156}&\typeC{73}&\typeC{4,278}&\typeC{50} \\
    \bottomrule
    \end{tabular}
}
\end{table}

\begin{figure}[t]
\captionsetup{belowskip=-2mm}
    \centering
    \includegraphics[width=\linewidth]{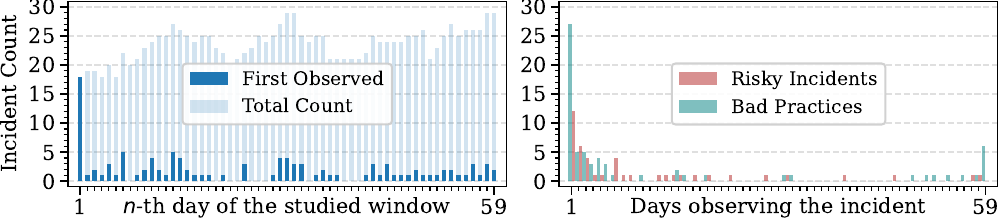}
    \caption{Daily incident count (left) and incident duration (right).}
    \label{fig:daily-incidents}
\end{figure}

\begin{figure}[t]
    \centering
    \includegraphics[width=\linewidth]{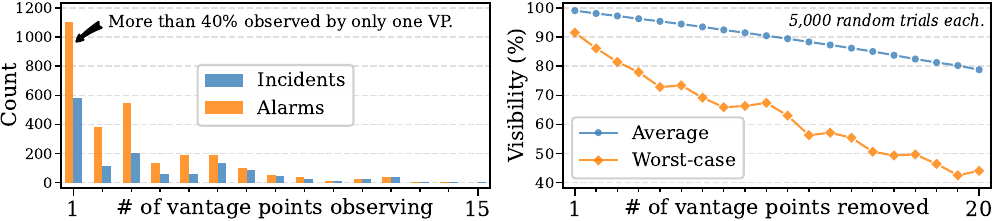}
    \caption{Number of VPs observing each incident (left), and overall incident visibility as VPs are randomly removed (right).}
    \label{fig:vp-distribution}
\end{figure}

\begin{figure}[t]
\captionsetup{belowskip=-2mm}
    \centering
    \includegraphics[width=\linewidth]{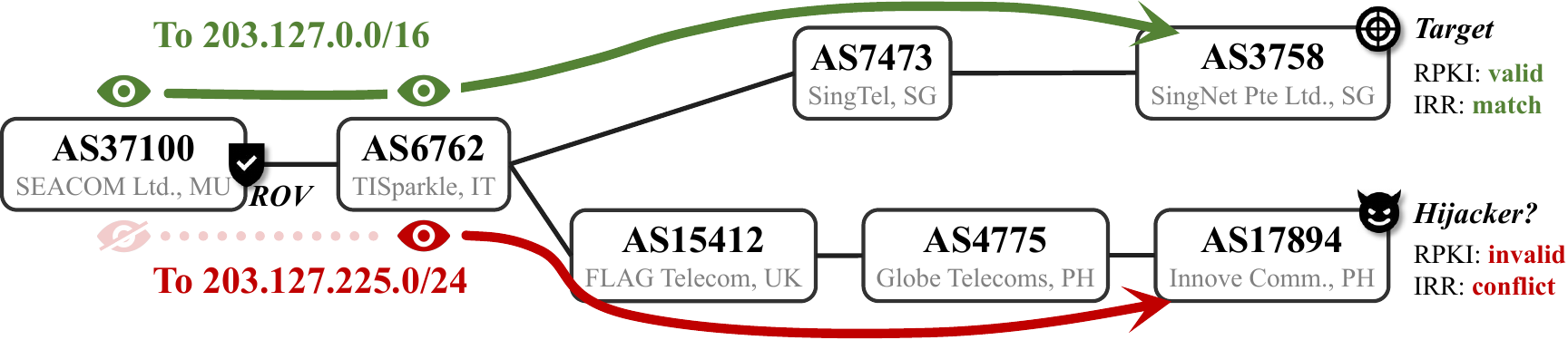}
    \caption{A real-world stealthy hijacking incident.}
    \label{fig:case-study}
\end{figure}

\section{Uncovering Stealthy Hijacking in the Wild}
\label{sec:empirical-study}

In this section, we develop heuristics for stealthy hijacking discovery, and empirically investigate the threat in the wild.

\subsection{Heuristics for Stealthy Hijacking Discovery}
\label{sec:heuristics}

From Definition~\ref{def:stealthy-hijacking-risk}, we derive practical heuristics to discover stealthy hijacking instances based on routing table discrepancies. For clarity, we denote \scalebox{0.9}{$p: V \dotsm (M) \dotsm O$} a route to prefix $p$, where $V$ is the vantage point, $O$ is the origin AS, and $M$ (if present) represents an intermediate AS along the path. Given two routes \scalebox{0.9}{$p_1: V_1 \dotsm (M_1) \dotsm O_1$} and \scalebox{0.9}{$p_2: V_2 \dotsm (M_2) \dotsm O_2$}, we examine the following conditions:
\begin{enumerate}[itemsep=0.1em,align=parleft,left=0pt..14pt]
    \item \underline{Conflict:} $p_2$ equals or is a sub-prefix of $p_1$, and $O_2 \neq O_1$.
    \item \underline{Unauthorized:} $p_2/O_2$ is RPKI-invalid while $p_1/O_1$ is valid.
    \item \underline{Stealthiness:} $V_1$ has no route to $p_2$ originated by $O_2$.
    \item \underline{Risk-critical AS:} There exist $M_1$ and $M_2$, with $M_1=M_2$.
    \item \underline{Risk-critical VP:} There exists $M_1$ such that $M_1=V_2$.
\end{enumerate}

We first define the \emph{loose heuristics}, which require conditions 1-4 to hold simultaneously to yield a ``victim-target-hijacker'' instance $(V_1, O_1, O_2)$, where $M_1$ ($M_2$) is the risk-critical AS. Since condition 4 infers the risk-critical AS from all intermediate ASes, the loose heuristics maximally utilize available route observations. However, the results may not be strictly reliable, as the risk-critical AS's route to the hijacker is inferred from a route segment rather than directly observed. To improve reliability, we further propose the \emph{strict heuristics}, replacing condition 4 with condition 5 to restrict risk-critical ASes to those that are also vantage points. This ensures strict adherence to Definition~\ref{def:stealthy-hijacking-risk} and produces more reliable results.

The two heuristics offer a trade-off between breadth and confidence. In practice, we apply both in our empirical study to maximize discovery while accounting for different confidence levels. For analytical risk assessment, we rely solely on the strict heuristics, as our framework infers complete Internet-wide routes (elaborated later).

\subsection{Real-World Observations and Insights}

We now present our empirical study to track stealthy hijacking in the wild. Specifically, we analyze daily RIB snapshots taken at 12:00 UTC by the RouteViews~\cite{routeviews} collectors \emph{route-views2}, \emph{amsix}, and \emph{wide}, starting from January 1, 2025. These collectors are based in North America, Europe, and Asia, respectively. Each daily archive contains about 50 million BGP routes from over 370 vantage points. For each day's snapshot, we apply the loose heuristics to discover potential stealthy hijacking instances. To ensure prefix-origin legitimacy, we cross-check RIPE NCC's RPKI database~\cite{riperpki}, the RADb IRR database~\cite{radbirr}, and the five RIRs' WHOIS databases~\cite{whois}, retaining only instances where the misbehaving origin is simultaneously RPKI-invalid, IRR-conflicting, and WHOIS-mismatching. To group related instances, we aggregate those affecting the same prefix into a single alarm, then merge alarms with the same misbehaving origin into a single incident.

To profile discovered incidents, we assign each incident with tags corresponding to predefined behaviors that either indicate route engineering or provide additional context. Table~\ref{tab:incident-tags} details the tag definitions and data sources used. Incidents without any tags indicative of route engineering are considered particularly risky, while those linked to route engineering still exhibit stealthy hijacking but are more likely due to misconfigurations, such as improper route aggregation or private IP leasing without updating registries. Notably, incidents tagged \emph{Direct View} follow the strict heuristics and are more reliable.

\noindent\textbf{Findings.}
Over a two-month window starting January 1, 2025, we capture 1,394 potential stealthy hijacking incidents in the wild. Deduplicating them over the timeline yields 110 unique incidents. Figure~\ref{fig:incidents-breakdown} provides a detailed breakdown of them: 43 cases, with no signs of route engineering, are particularly risky, while the rest are attributed to bad operational practices. Among them, 91 involve sub-prefix hijacking, which tends to have more serious impacts, and 22 are directly observed from vantage points, thus of the highest confidence. In total, these incidents involve 4,278 routes observed by 50 vantage points, affecting 156 prefixes and 73 origins across 31 countries, as summarized in Table~\ref{tab:incident-impact}. Over time, we observe 18-29 incidents per day, with 0-5 newly discovered daily (except on Day 1), as shown in Figure~\ref{fig:daily-incidents} (left). In terms of duration, 76 incidents (69.1\%) persist for 7 days or fewer, while 17 (12.7\%) last over 30 days, including 14 deemed bad operational practices, as shown in Figure~\ref{fig:daily-incidents} (right). Moreover, Figure~\ref{fig:vp-distribution} (left) shows that most incidents are seen by three or fewer vantage points, with over 40\% visible to only one. This indicates a strong dependence of stealthy hijacking visibility on vantage point selection. Figure~\ref{fig:vp-distribution} (right) further confirms this: randomly removing just 20 vantage points leads to a 22\% average drop in observable incidents, and up to 55\% in the worst case.

\smallskip
\noindent\textit{\underline{Takeaway:} Stealthy hijacking in the wild is mostly short-lived and targets sub-prefixes, with new cases emerging almost daily and some persisting long-term, likely due to overlooked misconfigurations. Its exposure is sensitive to vantage points.}

\smallskip
\noindent\textbf{Case Study.}
We present an example incident in Figure~\ref{fig:case-study}. This incident persists throughout our study and shows no signs of route engineering. Both vantage points, AS37100 (\emph{SEACOM}) and AS6762 (\emph{TISparkle}), observe the prefix 203.127.0.0/16 announced by its legitimate origin, AS3758 (\emph{SingNet}). Meanwhile, the sub-prefix 203.127.225.0/24 is announced by an unauthorized origin, AS17894 (\emph{Innove Comm.}). The two origins are in different countries and have no established relationship. Since AS37100 has deployed ROV with a 100\% filtering rate~\cite{apnic}, it discards the bogus /24 route. Yet, traffic from AS37100 or its customers to the /24 prefix still experiences hijacking when transited through legacy AS6762, which accepts the bogus route (the red path). Examination of AS37100's looking glass further confirms the incident (detailed in Appendix~\ref{sec:appendix:case-study}). We reported it to AS4775 (\emph{Globe Tel.}) on February 10, 2025, and received promise to investigate.

\noindent\textbf{Resources.}
We curate a high-confidence dataset of 318 real-world stealthy hijacking incidents, covering 2,178 unique routes, by selecting incidents tagged \emph{Direct View} while excluding those with \emph{Similar Name}\footnote{Incidents tagged \emph{Similar Name} are excluded to remove cases that are likely caused by private interconnection between affiliated ASes~\cite{giotsas2014inferring}.}. This dataset serves as ground truth for broader research and is specifically used to validate our framework in \S\ref{sec:performance-evaluation}. Beyond this study, we continue to run stealthy hijacking discovery as a service, implement on-demand data-plane validation based on RIPE Atlas~\cite{ripeatlas}, and provide a feature-rich frontend to publish daily incident reports. Readers are encouraged to explore these resources at \url{https://yhchen.cn/stealthy-bgp-hijacking}.

We conclude this section by reiterating a key observation, \ie stealthy hijacking exposure is sensitive to vantage point selection. This highlights the need for a comprehensive view of global routing to enable deterministic and exhaustive risk assessment, motivating our analytical approach presented next.
\section{The \ours Framework}
\label{sec:analytical-approach}

\begin{figure*}[t]
\captionsetup{belowskip=-2mm}
    \centering
    \includegraphics[width=.9\linewidth]{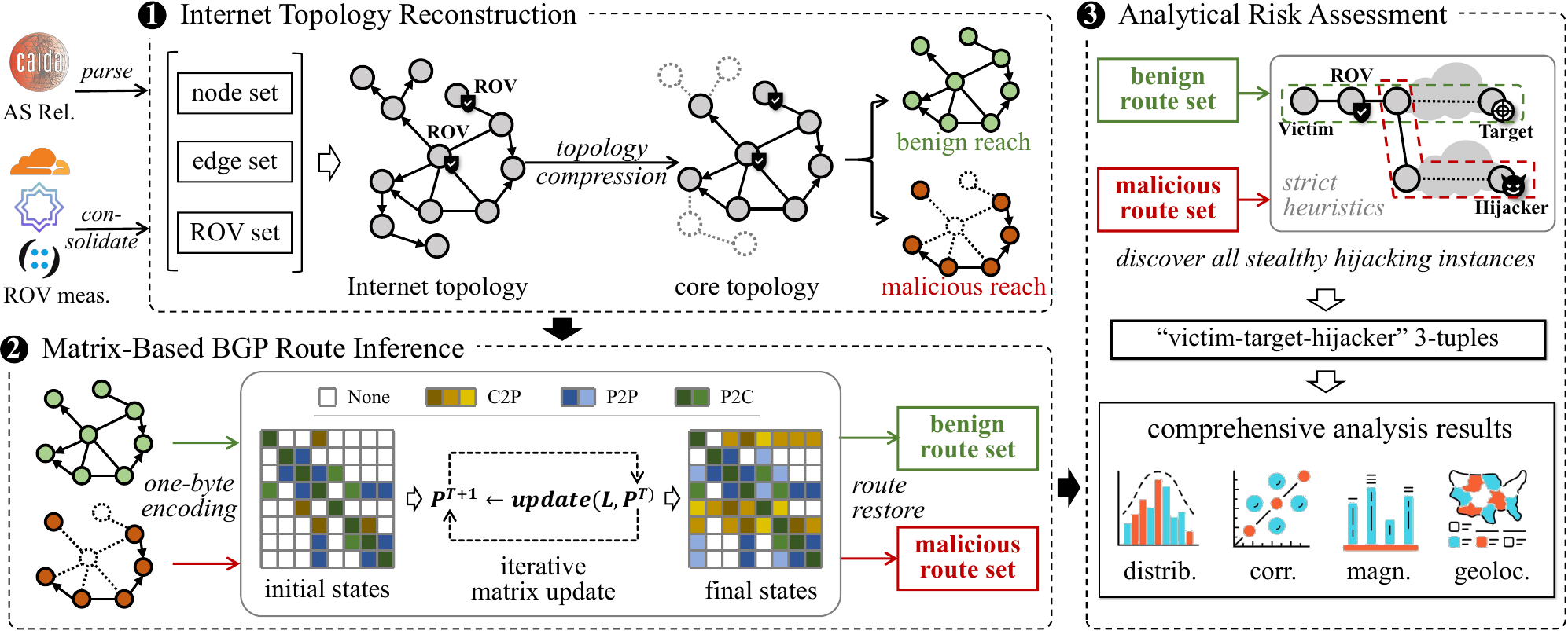}
    \caption{Workflow of the \ours framework.}
    \label{fig:workflow}
\end{figure*}

{\centering\noindent\oursmotto}

\subsection{Overview}
We present \ours, a BGP route inference framework dedicated to analytical assessment of stealthy hijacking risk. As shown in Figure~\ref{fig:workflow}, \ours takes AS relationships and ROV measurements from multiple sources as input. It then \first reconstructs the Internet topology, \second performs matrix-based BGP route inference, and \third applies the strict heuristics to assess stealthy hijacking risk across the inferred routes.

The core rationale is to determine whether any AS along legitimate routes can forward traffic to potential hijackers. Achieving comprehensive risk assessment thus requires knowledge of both \emph{all} routes to benign ASes and \emph{all} routes to potential hijackers, posing a primary challenge given the Internet scale. To address this, \ours compresses the Internet topology during reconstruction and extracts the benign reach and malicious reach, \ie the respective sub-topologies traversable by benign and malicious announcements (\hyperref[fig:workflow]{see~\scalebox{0.8}{\circled{\textbf{1}}}}). By converting them into matrices, where each cell encodes compact routing information, \ours iteratively updates these matrices through highly optimized matrix operations to infer Internet-wide routes (\hyperref[fig:workflow]{see~\scalebox{0.8}{\circled{\textbf{2}}}}). It then identifies risk-critical ASes by checking each benign route with the malicious route set (\hyperref[fig:workflow]{see~\scalebox{0.8}{\circled{\textbf{3}}}}). We elaborate on each step below.

\subsection{Internet Topology Reconstruction}
\label{sec:topology-reconstruction}

To reconstruct the Internet topology $G=(V,E,V_{rov})$, we use the CAIDA AS relationship dataset~\cite{caida_as_relationship} to obtain the set of ASes ($V$) and AS-to-AS links ($E$). Each link $(a_i, a_j, r) \in E$, where $a_i, a_j \in V$, is associated with a relationship type $r \in \{\text{C2P}, \text{P2P}, \text{P2C}\}$. To determine the set of ROV-enabled ASes ($V_{rov}$), we consolidate measurements from three sources: APNIC~\cite{apnic}, RoVista~\cite{li2023rovista}, and Cloudflare~\cite{cloudflare}. APNIC and RoVista report ROV-filtering rates per covered AS, and we consider an AS to be ROV-enabled if its filtering rate is at least 80\%, a confidence threshold adopted in prior work~\cite{chen2022rov} and also proved practical in our own experiments. Cloudflare, meanwhile, directly provides a list of ROV-enabled ASes. To maximize coverage, we include in $V_{rov}$ any AS identified as ROV-enabled by at least one of these sources.

We next compress the resulting topology. We notice that, under certain conditions, the routing table of a single-homed AS can be directly derived from its upstream transit AS, which exchanges all routes with it. As such, there is no need to include these single-homed ASes in the computationally intensive route inference process; instead, we remove them to form a \emph{core topology} for route inference, and later recover their routing tables from their respective upstreams. This preserves the integrity of inferred routes while reducing the number of vertices by 36.3\%, greatly improving inference efficiency. We defer details of the compression method in Appendix~\ref{sec:appendix:topology-compression}.

From the core topology, we further derive the benign reach and the malicious reach. The benign reach, where benign announcements can propagate freely, is identical to the core topology, while the malicious reach, denoting the restricted area where malicious announcements can propagate as ROV-enabled ASes block them, is obtained by removing all ROV-enabled ASes from the core topology. Route inference is later performed on these reaches to obtain legitimate AS-level routes and potential AS-to-hijacker routes, respectively.

\subsection{Matrix-Based BGP Route Inference}
\label{sec:matrix-based-bgp-simulation}

We leverage highly optimized matrix operations for efficient BGP route inference, addressing the limitations of existing methods~\cite{brandt2021optimized,furuness2023bgpy,li2023realizing}, which incur prohibitive overhead in generating complete BGP routes at the Internet scale.

\noindent\textbf{Inference Criteria.}
We follow the same criteria commonly used in prior works~\cite{morillo2021rov++,gilad2016we,mcdaniel2020flexsealing}, which are derived from the Gao-Rexford model~\cite{gao2001stable} and capture typical BGP control-plane behaviors. Specifically, the best-route selection follows a three-step process: \first prefer routes with the highest local preference, \second select routes with the shortest AS\_path, and \third break ties randomly. By default, local preference is set to reflect business incentives, \ie a route received from a customer is preferred over one from a peer, which is preferred over one from a provider. Additionally, the valley-free constraint~\cite{gao2001stable} is enforced in route propagation, \ie routes learned from a peer or provider are only forwarded to customers.

\noindent\textbf{One-Byte Route Priority Encoding.} Best-route selection is computationally intensive. To boost the process, we propose a one-byte route priority encoding that enables fast comparison and efficient update. Specifically, the priority of a route during selection is determined by two key properties, \ie local preference and path length. We use one byte to represent both:
\begin{equation*}
\newcommand\drawbits[1]{
    \tikz[x=1.5ex,y=1.5ex,every path/.style={draw=black,semithick}]{
        \foreach \y [count=\i] in {#1} {
            \node[] at (\i+0.5,0.5) {\small\y};
            \draw (\i,0) rectangle (\i+1,1);
        }
    }
}
\newcommand\scalemath[2]{\scalebox{#1}{\mbox{\ensuremath{\displaystyle #2}}}}
\scalemath{1}{
\overbrace{
    \underbrace{\text{\LARGE\drawbits{$b_7$,$b_6$}}}_{\text{\small LP}} \quad \underbrace{\text{\LARGE\drawbits{$b_5$,$b_4$,$b_3$,$b_2$,$b_1$,$b_0$}}}_{\text{\small \textasciitilde PL}}
}^{\text{\normalsize 8 bits in a byte}}
}
\end{equation*}
\noindent where the LP field (the two most significant bits) encodes the exact value of local preference, and the \textasciitilde PL field (the six least significant bits) encodes the \emph{bitwise complement} value of path length. The bit values of LP and PL are defined as follows:
\begin{align}
\label{eq:lp-definition}
    \text{LP} & = \left\{
        \begin{array}{rl}
            (11)_2, & \text{for routes received from customers} \\
            (10)_2, & \text{for routes received from peers} \\
            (01)_2, & \text{for routes received from providers} \\
            (00)_2, & \text{for unreachable origins}
        \end{array}
    \right. \\
    \label{eq:pl-definition}
    \text{PL} &= \left\{
    \begin{array}{ll}
        l \text{ in 6-bit form}, & \text{for path length of } l \\
        (000000)_2, & \text{for unreachable origins}
    \end{array}
    \right.
\end{align}
\noindent For example, a route received from a customer with a path length of 5 has $\text{LP} = (11)_2$ and $\text{PL} = (000101)_2$, resulting in a priority byte of $(11\ 111010)_2$. If the origin is unreachable or the route is yet to know, we use $(00\ 111111)_2$ as a placeholder for future update. Note that the 6-bit \textasciitilde PL field can represent a path length up to 63, so any longer route will be dropped. In our practice, no route exceeds this limit.

Compared with traditional data structures, our one-byte encoding maximizes memory utilization and enables fast priority comparison, \ie a greater byte value indicates a more preferable route. As a result, best-route selection becomes a simple task of finding the maximum value in a byte array. Moreover, its update during route propagation involves only basic arithmetic, \eg by subtracting one per hop to update the path length field (\textasciitilde PL). Such properties, as detailed later, help break down the complex iterations of BGP route inference into a series of highly optimized matrix operations.

\noindent\textbf{Route Priority Update.}
We follow an iterative paradigm for BGP route inference. It starts with an initial state where no AS knows a route to any other. As each AS announces its presence, route announcements propagate hop by hop across the Internet topology. Each round of one-hop propagation, along with corresponding routing table updates, is \emph{an iteration}.

We now describe how to update the one-byte encoding during an iteration. Consider a route to origin AS $a_j$ in AS $a_k$'s routing table: after $T$ iterations, we denote its priority byte by $p_{kj}^{T}$. In the next iteration, this route is forwarded one hop further. Suppose AS $a_i$ is to receive this route, then the priority byte of the \emph{received} route, denoted by $\hat{p}_{ikj}^{T\textsl{+1}}$, is determined by the $a_i$-to-$a_k$ relationship type and the value of $p_{kj}^T$. If the $a_i$-to-$a_k$ relationship exists and the propagation satisfies the valley-free constraint, $\hat{p}_{ikj}^{T\textsl{+1}}$ is set accordingly: its LP field inherits the local preference value for the $a_i$-to-$a_k$ relationship type (\ie the LP value of $p_{ik}^\textsl{1}$), and its \textasciitilde PL field equals $p_{kj}^T$'s \textasciitilde PL value minus one due to the increased path length. However, if the $a_i$-to-$a_k$ relationship does not exist or the propagation violates the valley-free constraint, $a_i$ is then not eligible to receive the route. In this case, we assign the placeholder value to $\hat{p}_{ikj}^{T\textsl{+1}}$.

Intuitively, this process involves two steps: \first compute the LP field of $\hat{p}_{ikj}^{T\textsl{+1}}$, and \second assign the \textasciitilde PL field accordingly. For the first step, we introduce a custom operator $\odot$:
\begin{equation}\label{eq:lp-update}
    \text{LP}[\hat{p}_{ikj}^{T\textsl{+1}}] = \text{LP}[p_{ik}^\textsl{1}] \odot \text{LP}[p_{kj}^T],
\end{equation}
\noindent where $\text{LP}[*]$ denotes the LP field of the corresponding priority byte. The $\odot$ operator maps all scenarios described above to corresponding outcomes; its truth table is shown in Table~\ref{tab:lp-update-truth-table}.

\begin{table}[ht]
\setlength\tabcolsep{3pt}
\renewcommand\arraystretch{1}
\newcommand\nonelk[1]{{\cellcolor{gray!20} #1}} 
\newcommand\nonevf[1]{{\cellcolor{BrickRed!20} #1}}
\newcommand\otherblk[1]{{\cellcolor{white} #1}}
\newcommand\samplelk{{\protect\tikz[x=3ex,y=1.5ex] \protect\fill [gray!20] (0,0) rectangle (1,1);}\ }
\newcommand\samplevf{{\protect\tikz[x=3ex,y=1.5ex] \protect\fill [BrickRed!20] (0,0) rectangle (1,1);}\ }
\caption{\label{tab:lp-update-truth-table} Truth table of $\text{LP}[p_{ik}^\textsl{1}] \odot \text{LP}[p_{kj}^T]$. The 00 outputs in \samplelk indicate that the input route does not exist or is not yet known, while those in \samplevf result from valley-free violations.}
\centering
\footnotesize
\begin{tabular}{|l|c c c c|}
    \hline
    \diagbox{$\text{LP}[p_{ik}^\textsl{1}]$}{$\text{LP}[p_{kj}^T]$}
       & \makecell{11\\(P2C)} & \makecell{10\\(P2P)} & \makecell{01\\(C2P)} & \makecell{00\\(None)} \\\hline
    11 (P2C) & \otherblk{11} & \nonevf{00} & \nonevf{00} & \nonelk{00} \\
    10 (P2P)& \otherblk{10} & \nonevf{00} & \nonevf{00} & \nonelk{00} \\
    01 (C2P)& \otherblk{01} & \otherblk{01} & \otherblk{01} & \nonelk{00} \\
    00 (None)& \nonelk{00} & \nonelk{00} & \nonelk{00} & \nonelk{00} \\\hline
\end{tabular}
\end{table}

The $\odot$ operator can be decomposed into basic arithmetic operations. Let $(ab)_2$, $(cd)_2$, and $(ef)_2$ denote the two bits in $\text{LP}[p_{ik}^\textsl{1}]$, $\text{LP}[p_{kj}^T]$, and $\text{LP}[p_{ik}^\textsl{1}] \odot \text{LP}[p_{kj}^T]$, respectively. Then, Table~\ref{tab:lp-update-truth-table} can be expressed by minimal boolean expressions:
\begin{align}
    \label{eq:boolean-e}
    e & = acd = a \cdot cd + 0 \cdot 0, \\
    \label{eq:boolean-f}
    f & = \bar{a}bd+\bar{a}bc+bcd = b \cdot cd + \bar{a}b \cdot (c+d).
\end{align}
\noindent We can thus implement $\odot$ efficiently
using shift and bitwise logic operations, which are vectorizable over matrices.

With $\hat{p}_{ikj}^{T\textsl{+1}}$'s LP field computed, its \textasciitilde PL field is determined:
\begin{equation}\label{eq:pl-update}
    \text{\textasciitilde PL}[\hat{p}_{ikj}^{T\textsl{+1}}] =
    \left\{
    \begin{array}{ll}
    \text{\textasciitilde PL}[p_{kj}^T] - 1, & \text{if } \text{LP}[\hat{p}_{ikj}^{T\textsl{+1}}] \neq (00)_2  \\
    (111111)_2, &  \text{otherwise}                     
    \end{array}
    \right.
\end{equation}
\noindent where $\text{LP}[*]$ and $\text{\textasciitilde PL}[*]$ denote the respective fields. Note that this two-branch computation can be further reformulated as a branchless expression without affecting correctness:
\begin{equation}\label{eq:pl-update-non-branch}
    \text{\textasciitilde PL}[\hat{p}_{ikj}^{T\textsl{+1}}] = \text{\textasciitilde PL}[p_{kj}^T] - \text{PL}[p_{ik}^\textsl{1}],
\end{equation}
\noindent where $\text{PL}[*]$ denotes the \emph{exact value} of path length. We provide the proof of equivalence between Equations~\eqref{eq:pl-update} and \eqref{eq:pl-update-non-branch} in Appendix~\ref{sec:appendix:non-branch-validity}. Since Equation~\eqref{eq:pl-update-non-branch} is independent of the LP field computation and is more amenable to vectorization, we implement Equation~\eqref{eq:pl-update-non-branch} in practice.

As $\hat{p}_{ikj}^{T\textsl{+1}}$ indicates an updated route that $a_i$ receives from $a_k$, the best-route selection at $a_i$ involves comparing $\hat{p}_{ikj}^{T\textsl{+1}}$ with $p_{ij}^T$ and prefers the one with the higher value. The priority byte of the final selected route at the end of iteration $T\textsl{+1}$ is thus determined by considering all $\hat{p}_{ikj}^{T\textsl{+1}}$ values across $k$:
\begin{equation}\label{eq:byte-selection}
    p_{ij}^{T\textsl{+1}} = max \left\{p_{ij}^T, \hat{p}_{i\textsl{0}j}^{T\textsl{+1}}, \dots, \hat{p}_{i\textsl{(n-1)}j}^{T\textsl{+1}} \right \},
\end{equation}
\noindent where $p_{ij}^T$ is known from iteration $T$ and $\hat{p}_{ikj}^{T\textsl{+1}}, k=0,\dots,n-1$ are computed via Equations~\eqref{eq:lp-update}-\eqref{eq:pl-update-non-branch}. We further prove that there always exists some $k$ such that $\hat{p}_{ikj}^{T\textsl{+1}} \geq p_{ij}^T$ (see Appendix~\ref{sec:appendix:best-route-selection-simplification}). Thus, the best-route selection process can be simplified as:
\begin{equation}\label{eq:byte-selection-simplified}
    p_{ij}^{T\textsl{+1}} = max\{ \hat{p}_{ikj}^{T\textsl{+1}} \}_{k=0}^{n\textsl{-1}}.
\end{equation}
\indent Based on Equation~\eqref{eq:byte-selection-simplified}, we can update $p_{ij}^{T}$ to $p_{ij}^{T\textsl{+1}}$ for \emph{each} pair of $i$ and $j$. To accomplish this efficiently, we design a matrix representation scheme to organize all route priority bytes, and extend byte-wise operations to equivalent matrix-wise ones that are highly optimized for batch processing.

\noindent\textbf{Matrix Representation.}
Extending the variable $p_{ij}^{T}$ across all $(i, j)$ pairs naturally forms an $n \times n$ matrix, where the cell at $i$-th row and $j$-th column contains the corresponding byte $p_{ij}^T$. We thus define a state matrix $P^T$ to collectively maintain all route priority bytes and update it recursively in place:
\begin{equation}\label{eq:matrix-recurrence}
    P^{T\textsl{+1}} = \textproc{Update}(P^\textsl{1}, P^T),
\end{equation}
\noindent where $T$ starts at 0, and the generating function \textproc{Update} performs the per-iteration computation on all route priority bytes as defined by Equations~\eqref{eq:lp-update} to \eqref{eq:byte-selection-simplified}. The initial state $P^\textsl{0}$ is constructed by setting all diagonal cells to $(11\ 111111)_2$ and all off-diagonal cells to $(00\ 111111)_2$, since each AS has no external reachability knowledge other than their own existence at the start of route inference. Note that the diagonal cells indicate conceptually self-pointing routes, by which an AS can reach itself without traversing any other ASes. As such, these routes have the highest local preference and a path length of zero, and are always preferred over looped routes.

Besides self-pointing routes, one-hop routes from each AS to its neighbors form another prior knowledge. $P^\textsl{1}$ captures these routes and is referenced during each iteration. Therefore, we treat $P^\textsl{1}$ as a constant and pre-compute it using a separate matrix $L$ (short for ``Link''). Equation~\eqref{eq:matrix-recurrence} is then altered:
\begin{equation}\label{eq:matrix-recurrence-altered}
    P^{T\textsl{+1}} = \textproc{Update}(L, P^T),
\end{equation}
where $L$ is initialized based on the given $G=(V, E, V_{rov})$:
\begin{equation}\label{eq:L-definition}
    L: l_{ij}=
    \left\{
    \begin{array}{ll}
    (11\ 111111)_2, & \text{for } i = j \\
    (11\ 111110)_2, & \text{for } (a_i,a_j,\text{P2C}) \in E \\
    (10\ 111110)_2, & \text{for } (a_i,a_j,\text{P2P}) \in E \\
    (01\ 111110)_2, & \text{for } (a_i,a_j,\text{C2P}) \in E \\
    (00\ 111111)_2, &  \text{otherwise}                     
    \end{array}
    \right.
\end{equation}
\indent The initialization of $P^\textsl{0}$ and $L$ follows the one-byte encoding. It can also be verified that $L=P^\textsl{1}=\textproc{Update}(L, P^\textsl{0})$ by evaluating the first iteration. In this matrix form, the update of priority bytes can be naturally extended to matrix-wise operations. Additional technical details are provided in Appendix~\ref{sec:appendix:additional-reports}.

\noindent\textbf{Route Set Generation.}
Through matrix operations, we iteratively update $P^T$ until no changes occur. During this process, we also record the next-hop AS for each selected route. Specifically, we maintain an $n\times n$ matrix $N^T$, where the cell at the $i$-th row and the $j$-th column stores the index of the next-hop AS for $a_i$ to reach $a_j$ after $T$ iterations. $N^T$ is updated alongside $P^T$. As iteration $T$ yields the maximum priority byte for $p_{ij}^{T\textsl{+1}}$ according to Equation~\eqref{eq:byte-selection-simplified}, we simultaneously record the index $k$ of the selected byte in the corresponding cell of $N^{T\textsl{+1}}$. Once inference completes, the full AS paths can be restored by recursively tracing the next-hop AS using $N^T$ (detailed in Appendix~\ref{sec:appendix:restoring-route-from-next-hop}). In this way, we generate the complete set of AS-level routes across the Internet topology: On the benign reach, we obtain the benign route set, and on the malicious reach, we obtain the malicious route set.

\subsection{Analytical Risk Assessment}

Given the benign and malicious route sets, we systematically discover all potential stealthy hijacking instances. According to our strict heuristics, stealthy hijacking becomes possible only if the victim does not receive any malicious route to the hijacker (otherwise it is direct hijacking), while at least one intermediate AS on the benign victim-to-target route accepts a malicious route to the hijacker. Therefore, we iterate through all benign routes, treating the vantage point AS as the victim and the origin AS as the target. We then check all ASes to identify potential hijackers that satisfy these conditions:
\begin{itemize}[itemsep=.3em,align=parleft,left=2pt..13pt]
    \item The hijacker is neither the victim nor the target.
    \item No victim-to-hijacker route exists in the malicious set.
    \item There is an AS on the victim-to-target route for which the route to the hijacker exists in the malicious route set.
\end{itemize}
\noindent By exhaustively examining all benign routes, we obtain a complete set of stealthy hijacking instances that \emph{could} occur in the current Internet. Based on these instances, we can assess the risk in a statistical manner, which is presented next.

\section{Assessing the Stealthy Hijacking Risk}
\label{sec:risk-assessment}

In this section, we assess the stealthy hijacking risk posed by the current ROV deployment through \ours.

\subsection{Framework Setup}

\begin{figure*}[t]
\setlength{\belowcaptionskip}{-4mm}
    \centering
    \includegraphics[width=.88\linewidth]{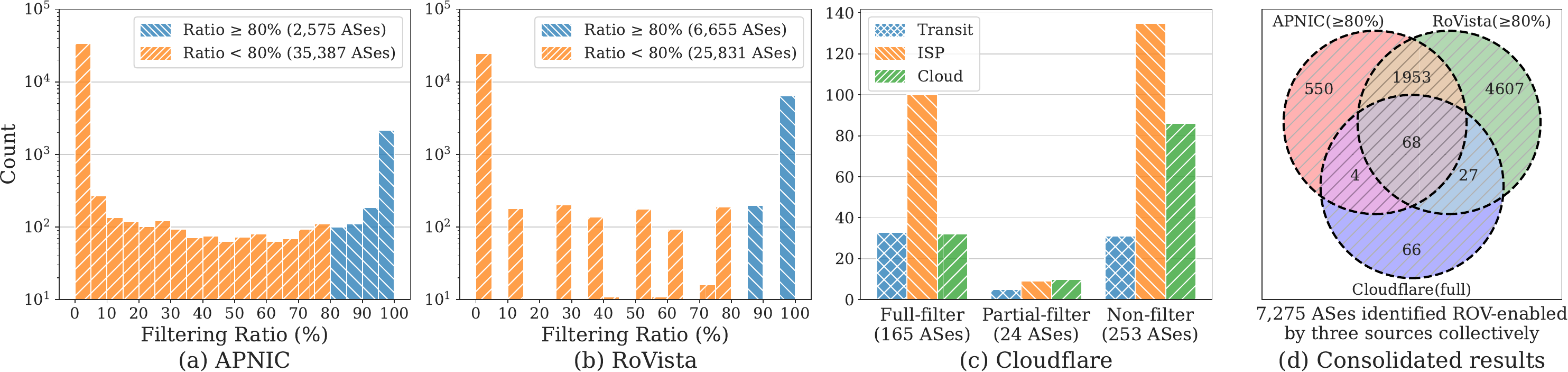}
    \caption{The three ROV measurements and the consolidated results.}
    \label{fig:rov-measurements}
\end{figure*}

We implement \ours in Python 3.10 and use Numba to compile matrix operations to low-level C code with GPU acceleration. It performs on a Linux platform with an Intel Xeon E5-2650 CPU and an NVIDIA GeForce RTX 2080Ti GPU.
For Internet topology reconstruction, we use the CAIDA AS relationship dataset~\cite{caida_as_relationship} released on March 1, 2025, which contains 77,600 ASes and 709,737 AS relationships. The three ROV measurements are collected on the same date. APNIC~\cite{apnic} reports 43,042 ASes, of which 2,575 exhibit an ROV filtering ratio of at least 80\%. RoVista~\cite{li2023rovista} covers 32,486 ASes, with 6,655 meeting the same threshold. In addition, Cloudlare lists 165 ASes with full ROV filtering. Figure~\ref{fig:rov-measurements} summarizes their statistics. Collectively, we identify 7,275 ASes recognized as ROV-enabled by at least one source. This set forms our final input of ROV-enabled ASes to \ours.

Using the AS relationship data and ROV measurements, we reconstruct the Internet topology. After compression, the resulting core topology contains 49,403 ASes (a 36.3\% reduction) and 681,540 AS relationships (a 3.97\% reduction). From this core topology, we derive both the benign reach and the malicious reach, and perform matrix-based BGP route inference on each. We run the inference for up to 20 iterations to ensure convergence. This generates the benign route set with 5,963,253,322 routes and the malicious route set with 2,399,622,350 routes. Based on these inferred routes, we discover the complete set of potential stealthy hijacking instances and conduct further assessment.

\begin{figure}[t]
\setlength{\belowcaptionskip}{-4mm}
    \centering
    \includegraphics[width=.93\linewidth]{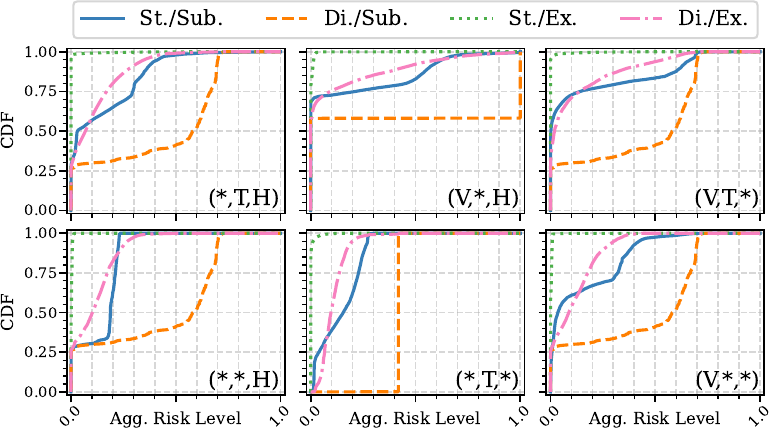}
    \caption{CDF of aggregated risk levels.}
    \label{fig:aggregated-risk-levels}
\end{figure}

\begin{table}[t]
\setlength{\belowcaptionskip}{0.3\baselineskip}
\caption{Key statistics of aggregated risk levels.}
\newcommand\placeholder{$0.830_{\textcolor{blue}{\scriptstyle\blacktriangledown \mathbf{0.123}}}$}
\label{tab:risk-dissection}
\footnotesize
\centering
\begin{ThreePartTable}
\begin{tabu}{X[1,c]  X[0.5,c] | X[1,c] X[1,c] X[1,c] X[1,c]}
    \toprule
    \multicolumn{2}{c|}{\multirow{2}{*}{\textbf{Statistics}\textsuperscript{1}}}&\multicolumn{4}{c}{\textbf{Hijacking Type}\textsuperscript{2}}\\
    \cline{3-6}
    & & \textbf{St./Sub.} & \textbf{Di./Sub.} & \textbf{St./Ex.} & \textbf{Di./Ex.}\\
    \midrule
\multirow{5}{*}{\textbf{$\mathcal{P}$(*,T,H)}}&$\mathit{min}$&$0.000_{\textcolor{blue}{\scriptstyle\blacktriangledown \mathbf{0.000}}}$&$0.000_{\textcolor{blue}{\scriptstyle\blacktriangledown \mathbf{0.000}}}$&$0.000_{\textcolor{blue}{\scriptstyle\blacktriangledown \mathbf{0.000}}}$&$0.000_{\textcolor{blue}{\scriptstyle\blacktriangledown \mathbf{0.000}}}$\\
&$\mathit{25th}$&$0.000_{\textcolor{BrickRed}{\scriptstyle\blacktriangle \mathbf{0.000}}}$&$0.001_{\textcolor{blue}{\scriptstyle\blacktriangledown \mathbf{0.992}}}$&$0.000_{\textcolor{blue}{\scriptstyle\blacktriangledown \mathbf{0.000}}}$&$0.000_{\textcolor{blue}{\scriptstyle\blacktriangledown \mathbf{0.183}}}$\\
&$\mathit{50th}$&$0.033_{\textcolor{BrickRed}{\scriptstyle\blacktriangle \mathbf{0.033}}}$&$0.582_{\textcolor{blue}{\scriptstyle\blacktriangledown \mathbf{0.412}}}$&$0.000_{\textcolor{BrickRed}{\scriptstyle\blacktriangle \mathbf{0.000}}}$&$0.069_{\textcolor{blue}{\scriptstyle\blacktriangledown \mathbf{0.246}}}$\\
&$\mathit{75th}$&$0.296_{\textcolor{BrickRed}{\scriptstyle\blacktriangle \mathbf{0.296}}}$&$0.668_{\textcolor{blue}{\scriptstyle\blacktriangledown \mathbf{0.326}}}$&$0.000_{\textcolor{BrickRed}{\scriptstyle\blacktriangle \mathbf{0.000}}}$&$0.170_{\textcolor{blue}{\scriptstyle\blacktriangledown \mathbf{0.329}}}$\\
&$\mathit{max}$&$0.995_{\textcolor{BrickRed}{\scriptstyle\blacktriangle \mathbf{0.995}}}$&$0.737_{\textcolor{blue}{\scriptstyle\blacktriangledown \mathbf{0.259}}}$&$0.993_{\textcolor{BrickRed}{\scriptstyle\blacktriangle \mathbf{0.993}}}$&$0.737_{\textcolor{blue}{\scriptstyle\blacktriangledown \mathbf{0.259}}}$\\
\hline
\multirow{5}{*}{\textbf{$\mathcal{P}$(V,*,H)}}&$\mathit{min}$&$0.000_{\textcolor{blue}{\scriptstyle\blacktriangledown \mathbf{0.000}}}$&$0.000_{\textcolor{blue}{\scriptstyle\blacktriangledown \mathbf{0.000}}}$&$0.000_{\textcolor{blue}{\scriptstyle\blacktriangledown \mathbf{0.000}}}$&$0.000_{\textcolor{blue}{\scriptstyle\blacktriangledown \mathbf{0.000}}}$\\
&$\mathit{25th}$&$0.000_{\textcolor{blue}{\scriptstyle\blacktriangledown \mathbf{0.000}}}$&$0.000_{\textcolor{blue}{\scriptstyle\blacktriangledown \mathbf{1.000}}}$&$0.000_{\textcolor{blue}{\scriptstyle\blacktriangledown \mathbf{0.000}}}$&$0.000_{\textcolor{blue}{\scriptstyle\blacktriangledown \mathbf{0.125}}}$\\
&$\mathit{50th}$&$0.000_{\textcolor{blue}{\scriptstyle\blacktriangledown \mathbf{0.000}}}$&$0.000_{\textcolor{blue}{\scriptstyle\blacktriangledown \mathbf{1.000}}}$&$0.000_{\textcolor{blue}{\scriptstyle\blacktriangledown \mathbf{0.000}}}$&$0.000_{\textcolor{blue}{\scriptstyle\blacktriangledown \mathbf{0.296}}}$\\
&$\mathit{75th}$&$0.218_{\textcolor{BrickRed}{\scriptstyle\blacktriangle \mathbf{0.218}}}$&$1.000_{\textcolor{blue}{\scriptstyle\blacktriangledown \mathbf{0.000}}}$&$0.000_{\textcolor{blue}{\scriptstyle\blacktriangledown \mathbf{0.000}}}$&$0.091_{\textcolor{blue}{\scriptstyle\blacktriangledown \mathbf{0.464}}}$\\
&$\mathit{max}$&$0.994_{\textcolor{BrickRed}{\scriptstyle\blacktriangle \mathbf{0.994}}}$&$1.000_{\textcolor{blue}{\scriptstyle\blacktriangledown \mathbf{0.000}}}$&$0.994_{\textcolor{BrickRed}{\scriptstyle\blacktriangle \mathbf{0.994}}}$&$1.000_{\textcolor{blue}{\scriptstyle\blacktriangledown \mathbf{0.000}}}$\\
\hline
\multirow{5}{*}{\textbf{$\mathcal{P}$(V,T,*)}}&$\mathit{min}$&$0.000_{\textcolor{blue}{\scriptstyle\blacktriangledown \mathbf{0.000}}}$&$0.000_{\textcolor{blue}{\scriptstyle\blacktriangledown \mathbf{0.000}}}$&$0.000_{\textcolor{blue}{\scriptstyle\blacktriangledown \mathbf{0.000}}}$&$0.000_{\textcolor{blue}{\scriptstyle\blacktriangledown \mathbf{0.000}}}$\\
&$\mathit{25th}$&$0.000_{\textcolor{BrickRed}{\scriptstyle\blacktriangle \mathbf{0.000}}}$&$0.001_{\textcolor{blue}{\scriptstyle\blacktriangledown \mathbf{0.992}}}$&$0.000_{\textcolor{blue}{\scriptstyle\blacktriangledown \mathbf{0.000}}}$&$0.000_{\textcolor{blue}{\scriptstyle\blacktriangledown \mathbf{0.068}}}$\\
&$\mathit{50th}$&$0.002_{\textcolor{BrickRed}{\scriptstyle\blacktriangle \mathbf{0.002}}}$&$0.582_{\textcolor{blue}{\scriptstyle\blacktriangledown \mathbf{0.412}}}$&$0.000_{\textcolor{blue}{\scriptstyle\blacktriangledown \mathbf{0.000}}}$&$0.020_{\textcolor{blue}{\scriptstyle\blacktriangledown \mathbf{0.253}}}$\\
&$\mathit{75th}$&$0.141_{\textcolor{BrickRed}{\scriptstyle\blacktriangle \mathbf{0.141}}}$&$0.668_{\textcolor{blue}{\scriptstyle\blacktriangledown \mathbf{0.326}}}$&$0.000_{\textcolor{blue}{\scriptstyle\blacktriangledown \mathbf{0.000}}}$&$0.139_{\textcolor{blue}{\scriptstyle\blacktriangledown \mathbf{0.471}}}$\\
&$\mathit{max}$&$0.740_{\textcolor{BrickRed}{\scriptstyle\blacktriangle \mathbf{0.740}}}$&$0.737_{\textcolor{blue}{\scriptstyle\blacktriangledown \mathbf{0.259}}}$&$0.511_{\textcolor{BrickRed}{\scriptstyle\blacktriangle \mathbf{0.511}}}$&$0.737_{\textcolor{blue}{\scriptstyle\blacktriangledown \mathbf{0.259}}}$\\
\hline
\multirow{5}{*}{\textbf{$\mathcal{P}$(*,*,H)}}&$\mathit{min}$&$0.000_{\textcolor{blue}{\scriptstyle\blacktriangledown \mathbf{0.000}}}$&$0.000_{\textcolor{blue}{\scriptstyle\blacktriangledown \mathbf{0.000}}}$&$0.000_{\textcolor{blue}{\scriptstyle\blacktriangledown \mathbf{0.000}}}$&$0.000_{\textcolor{blue}{\scriptstyle\blacktriangledown \mathbf{0.000}}}$\\
&$\mathit{25th}$&$0.000_{\textcolor{BrickRed}{\scriptstyle\blacktriangle \mathbf{0.000}}}$&$0.001_{\textcolor{blue}{\scriptstyle\blacktriangledown \mathbf{0.992}}}$&$0.000_{\textcolor{blue}{\scriptstyle\blacktriangledown \mathbf{0.000}}}$&$0.000_{\textcolor{blue}{\scriptstyle\blacktriangledown \mathbf{0.260}}}$\\
&$\mathit{50th}$&$0.188_{\textcolor{BrickRed}{\scriptstyle\blacktriangle \mathbf{0.188}}}$&$0.582_{\textcolor{blue}{\scriptstyle\blacktriangledown \mathbf{0.412}}}$&$0.002_{\textcolor{BrickRed}{\scriptstyle\blacktriangle \mathbf{0.002}}}$&$0.102_{\textcolor{blue}{\scriptstyle\blacktriangledown \mathbf{0.245}}}$\\
&$\mathit{75th}$&$0.212_{\textcolor{BrickRed}{\scriptstyle\blacktriangle \mathbf{0.212}}}$&$0.668_{\textcolor{blue}{\scriptstyle\blacktriangledown \mathbf{0.326}}}$&$0.004_{\textcolor{BrickRed}{\scriptstyle\blacktriangle \mathbf{0.004}}}$&$0.173_{\textcolor{blue}{\scriptstyle\blacktriangledown \mathbf{0.266}}}$\\
&$\mathit{max}$&$0.236_{\textcolor{BrickRed}{\scriptstyle\blacktriangle \mathbf{0.236}}}$&$0.737_{\textcolor{blue}{\scriptstyle\blacktriangledown \mathbf{0.259}}}$&$0.030_{\textcolor{BrickRed}{\scriptstyle\blacktriangle \mathbf{0.030}}}$&$0.602_{\textcolor{blue}{\scriptstyle\blacktriangledown \mathbf{0.270}}}$\\
\hline
\multirow{5}{*}{\textbf{$\mathcal{P}$(*,T,*)}}&$\mathit{min}$&$0.000_{\textcolor{blue}{\scriptstyle\blacktriangledown \mathbf{0.000}}}$&$0.419_{\textcolor{blue}{\scriptstyle\blacktriangledown \mathbf{0.567}}}$&$0.000_{\textcolor{blue}{\scriptstyle\blacktriangledown \mathbf{0.000}}}$&$0.006_{\textcolor{blue}{\scriptstyle\blacktriangledown \mathbf{0.014}}}$\\
&$\mathit{25th}$&$0.039_{\textcolor{BrickRed}{\scriptstyle\blacktriangle \mathbf{0.039}}}$&$0.419_{\textcolor{blue}{\scriptstyle\blacktriangledown \mathbf{0.567}}}$&$0.000_{\textcolor{BrickRed}{\scriptstyle\blacktriangle \mathbf{0.000}}}$&$0.066_{\textcolor{blue}{\scriptstyle\blacktriangledown \mathbf{0.175}}}$\\
&$\mathit{50th}$&$0.155_{\textcolor{BrickRed}{\scriptstyle\blacktriangle \mathbf{0.155}}}$&$0.419_{\textcolor{blue}{\scriptstyle\blacktriangledown \mathbf{0.567}}}$&$0.000_{\textcolor{BrickRed}{\scriptstyle\blacktriangle \mathbf{0.000}}}$&$0.097_{\textcolor{blue}{\scriptstyle\blacktriangledown \mathbf{0.228}}}$\\
&$\mathit{75th}$&$0.228_{\textcolor{BrickRed}{\scriptstyle\blacktriangle \mathbf{0.228}}}$&$0.419_{\textcolor{blue}{\scriptstyle\blacktriangledown \mathbf{0.567}}}$&$0.000_{\textcolor{BrickRed}{\scriptstyle\blacktriangle \mathbf{0.000}}}$&$0.137_{\textcolor{blue}{\scriptstyle\blacktriangledown \mathbf{0.312}}}$\\
&$\mathit{max}$&$0.309_{\textcolor{BrickRed}{\scriptstyle\blacktriangle \mathbf{0.309}}}$&$0.419_{\textcolor{blue}{\scriptstyle\blacktriangledown \mathbf{0.567}}}$&$0.154_{\textcolor{BrickRed}{\scriptstyle\blacktriangle \mathbf{0.154}}}$&$0.419_{\textcolor{blue}{\scriptstyle\blacktriangledown \mathbf{0.567}}}$\\
\hline
\multirow{5}{*}{\textbf{$\mathcal{P}$(V,*,*)}}&$\mathit{min}$&$0.000_{\textcolor{blue}{\scriptstyle\blacktriangledown \mathbf{0.000}}}$&$0.000_{\textcolor{blue}{\scriptstyle\blacktriangledown \mathbf{0.000}}}$&$0.000_{\textcolor{blue}{\scriptstyle\blacktriangledown \mathbf{0.000}}}$&$0.000_{\textcolor{blue}{\scriptstyle\blacktriangledown \mathbf{0.000}}}$\\
&$\mathit{25th}$&$0.007_{\textcolor{BrickRed}{\scriptstyle\blacktriangle \mathbf{0.007}}}$&$0.001_{\textcolor{blue}{\scriptstyle\blacktriangledown \mathbf{0.992}}}$&$0.000_{\textcolor{BrickRed}{\scriptstyle\blacktriangle \mathbf{0.000}}}$&$0.001_{\textcolor{blue}{\scriptstyle\blacktriangledown \mathbf{0.302}}}$\\
&$\mathit{50th}$&$0.035_{\textcolor{BrickRed}{\scriptstyle\blacktriangle \mathbf{0.035}}}$&$0.582_{\textcolor{blue}{\scriptstyle\blacktriangledown \mathbf{0.412}}}$&$0.000_{\textcolor{BrickRed}{\scriptstyle\blacktriangle \mathbf{0.000}}}$&$0.087_{\textcolor{blue}{\scriptstyle\blacktriangledown \mathbf{0.268}}}$\\
&$\mathit{75th}$&$0.321_{\textcolor{BrickRed}{\scriptstyle\blacktriangle \mathbf{0.321}}}$&$0.668_{\textcolor{blue}{\scriptstyle\blacktriangledown \mathbf{0.326}}}$&$0.005_{\textcolor{BrickRed}{\scriptstyle\blacktriangle \mathbf{0.005}}}$&$0.179_{\textcolor{blue}{\scriptstyle\blacktriangledown \mathbf{0.234}}}$\\
&$\mathit{max}$&$0.720_{\textcolor{BrickRed}{\scriptstyle\blacktriangle \mathbf{0.720}}}$&$0.737_{\textcolor{blue}{\scriptstyle\blacktriangledown \mathbf{0.259}}}$&$0.400_{\textcolor{BrickRed}{\scriptstyle\blacktriangle \mathbf{0.400}}}$&$0.493_{\textcolor{BrickRed}{\scriptstyle\blacktriangle \mathbf{0.030}}}$\\
\hline
\textbf{$\mathcal{P}$(*,*,*)}&---&$0.141_{\textcolor{BrickRed}{\scriptstyle\blacktriangle \mathbf{0.141}}}$&$0.419_{\textcolor{blue}{\scriptstyle\blacktriangledown \mathbf{0.567}}}$&$0.002_{\textcolor{BrickRed}{\scriptstyle\blacktriangle \mathbf{0.002}}}$&$0.106_{\textcolor{blue}{\scriptstyle\blacktriangledown \mathbf{0.248}}}$\\
    \bottomrule
\end{tabu}
\begin{tablenotes}
\scriptsize
\item[1] $\mathit{25th}$, $\mathit{50th}$, and $\mathit{75th}$ represent the respective percentiles.
\item[2] St., Di., Sub., and Ex. stand for ``stealthy'', ``direct'', ``sub-prefix'', and ``exact-prefix''. The difference ($\textcolor{blue}{\scriptstyle\blacktriangledown}$/$\textcolor{BrickRed}{\scriptstyle\blacktriangle}$) is based on the comparison with a no-ROV scenario.
\end{tablenotes}
\end{ThreePartTable}
\vspace{-2mm}
\end{table}

\subsection{Risk Assessment Results}

We now examine the stealthy hijacking risk in the current Internet based on \ours's results, gaining insights into its prevalence, distribution, and underlying topological features.

\noindent\textbf{Overall Risk Level.}
We first assess the overall stealthy hijacking risk posed by the current partial ROV deployment. For any type of BGP hijacking (stealthy or direct, exact- or sub-prefix), we define its overall risk level as the statistical success probability across random ``victim-target-hijacker'' instances.

Our assessment with the 7,275 ROV-enabled ASes reveals an overall risk level of 0.141 for sub-prefix and 0.002 for exact-prefix stealthy hijacking. This suggests that a random stealthy hijacking attempt has over 14\% probability of success. In comparison, with no ROV, both risks are 0. This stark contrast (0.141 versus 0) highlights the substantial increase in stealthy hijacking risk \emph{exclusively} introduced by the current partial ROV deployment. Meanwhile, we observe the positive effect of ROV in reducing direct hijacking risk. With the same deployment, sub-prefix and exact-prefix direct hijacking risks drop to 0.419 and 0.106, with reductions of 57.5\% and 70.1\%, respectively, compared to a no-ROV scenario. These opposing trends, \ie rising stealthy risk and declining direct risk, reflect the double-edged effect of ROV in partial deployment.

\smallskip
\noindent\textit{\underline{Takeaway 1:} While effectively mitigating direct hijacking risk, the current partial ROV deployment significantly amplifies stealthy hijacking risk from 0 to a 14.1\% overall success probability. This risk arises solely due to ROV deployment.}

\begin{figure*}[t]
\setlength{\belowcaptionskip}{-2mm}
    \centering
    \includegraphics[width=.95\linewidth]{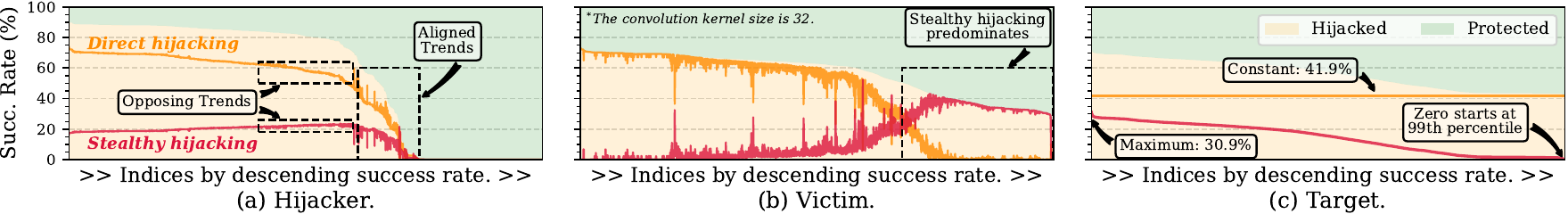}
    \vspace{-2mm}
    \caption{Hijacking success probabilities of ASes in different roles.}
    \label{fig:distribution-over-ases}
\end{figure*}

\smallskip
\noindent\textbf{Aggregated Risk Level.}
Since each hijacking instance involves three distinct entities, \ie the victim, the target, and the hijacker, we aggregate instances by specific entity combinations to assess the risk under certain conditions. For example, aggregating by the ``target-hijacker'' pair creates a set of groups where each comprises all instances with the same target and hijacker but varying victims. Assessing the risk level within each group gives the probability that a \emph{specific} hijacker can successfully hijack a \emph{specific} target on any \emph{random} victim. We refer to this probability as the \emph{aggregated risk level} of the corresponding group. Aggregated risk levels across all such groups, \eg all unique ``target-hijacker'' pairs in this example, collectively form a probability distribution.

We represent an entity combination as a 3-tuple, where absent entities are denoted by * and present entities by their initials; for example, (\scalebox{0.95}{*,T,H}) represents aggregation by the ``target-hijacker'' pair. For each type of BGP hijacking, we assess its aggregated risk level by all entity combinations except (\scalebox{0.95}{V,T,H}), since aggregating by individual instances provides little statistical insight. We use $\mathcal{P}$($\cdot$) to denote the corresponding probability distribution. Notably, (\scalebox{0.95}{*,*,*}) represents no aggregation, so $\mathcal{P}$(\scalebox{0.95}{*,*,*}) reduces to a single value, \ie the overall risk level presented in previous analysis.

Figure~\ref{fig:aggregated-risk-levels} shows the cumulative density function (CDF) of the aggregated risk levels, while Table~\ref{tab:risk-dissection} presents key statistics and highlights differences compared to the no-ROV baseline. In general, the stealthy hijacking risk tends to concentrate on a few specific pairs when aggregated by two entities (see $\mathcal{P}$(\scalebox{0.95}{*,T,H}), $\mathcal{P}$(\scalebox{0.95}{V,*,H}), and $\mathcal{P}$(\scalebox{0.95}{V,T,*})), but spread more evenly across ASes when aggregated by a single entity (see $\mathcal{P}$(\scalebox{0.95}{*,*,H}), $\mathcal{P}$(\scalebox{0.95}{*,T,*}), and $\mathcal{P}$(\scalebox{0.95}{V,*,*})). For example, $\mathcal{P}$(\scalebox{0.95}{*,T,H}), the success probability distribution of a specific hijacker hijacking a specific target, has a median of 0.033 and a maximum of 0.995 under sub-prefix stealthy hijacking. In contrast, $\mathcal{P}$(\scalebox{0.95}{*,*,H}), the success probability distribution of a specific hijacker hijacking any random route, has a higher median of 0.188 but a much lower maximum of 0.236. This disparity reflects the difference in hijacking capacity between \emph{targeted} and \emph{non-targeted} sub-prefix stealthy hijacking: the targeted type enables a few hijackers to achieve near-certain success, while the non-targeted yields only moderate success probability across all hijackers. A similar pattern is observed in exact-prefix stealthy hijacking, but not in direct hijacking, either exact-prefix or sub-prefix. Again, the double-edged effect of partial ROV deployment is evident, as most statistics of stealthy hijacking risk show an increase (highlighted in red in Table~\ref{tab:risk-dissection}), while those of direct hijacking risk show a decrease (highlighted in blue). The only exception is the slight increase in the maximum value for exact-prefix direct hijacking, due to the increase of variance in probability distributions caused by random tie-breaking between competing routes for the exact prefixes.

\smallskip
\noindent\textit{\underline{Takeaway 2:} Targeted stealthy hijacking achieves near-certain success on specific AS pairs (up to 99.5\%), while non-targeted stealthy hijacking distributes risk more evenly across ASes (with a maximum of 23.6\%). In contrast, direct hijacking does not exhibit these patterns.}

\smallskip
\noindent\textbf{Distribution across ASes.}
We further examine how hijacking risk distributes across ASes in different roles by assessing \emph{role-specific} hijacking success probabilities, \ie $\mathcal{P}$(\scalebox{0.95}{V,*,*}), $\mathcal{P}$(\scalebox{0.95}{*,T,*}), and $\mathcal{P}$(\scalebox{0.95}{*,*,H}). Interpretation of hijacking success probability differs by role: for a hijacker, it reflects its capability to hijack routes globally, while for a victim or target, it reflects its exposure to hijacking threats from any hijacker.

We now focus exclusively on sub-prefix hijacking, as it is generally more impactful. Figure~\ref{fig:distribution-over-ases}(a)-(c) present the hijacking success probabilities for ASes acting as hijackers, victims, and targets, respectively, under both stealthy (red curve) and direct (orange curve) hijacking. In each figure, ASes are sorted along the X-axis in descending order of their overall hijacking success probability, \ie the sum of both stealthy and direct hijacking success probabilities. To improve readability, the curves in Figure~\ref{fig:distribution-over-ases}(b) are smoothed using a convolution kernel of size 32. We observe that over 50\% of ASes, as attackers, can hijack over 80\% of global routes despite ROV deployment; even the most protected 1\% of ASes face considerable risk, with about 30\% of their routes vulnerable as victims and 43\% as targets. This highlights the limited protection ROV provides against BGP hijacking under its current deployment.

In Figure~\ref{fig:distribution-over-ases}(a), as the overall success probability decreases along the X-axis, stealthy hijacking shows an upward trend, in contrast to the steady decline of direct hijacking. However, for a few ASes (as hijackers) where ROV filtering is highly effective, both direct and stealthy hijacking's success probabilities align and drop towards zero. The opposing trend reveals a key tradeoff introduced by partial ROV deployment: while ROV blocks malicious routes, it also limits the route visibility of benign ASes, inadvertently exacerbating stealthy hijacking risk. Yet, the convergence of stealthy and direct hijacking's success probabilities at the tail end suggests diminishing returns, \ie once ROV restrictions on attackers become sufficiently strong, stealthy hijacking also loses its advantage.

We observe a similar opposing trend in Figure~\ref{fig:distribution-over-ases}(b), where 32\% of ASes, as victims, face even higher stealthy hijacking risk than direct hijacking, with success probability reaching up to 72\% in extreme cases. Moreover, stealthy hijacking's success probabilities exhibit considerable variability, as indicated by noticeable spikes in the curves, suggesting that the stealthy hijacking risk faced by a victim is more case-specific and less predictable. In contrast, Figure~\ref{fig:distribution-over-ases}(c) shows that stealthy hijacking risk distributes more evenly across ASes as targets, affecting over 99\% of them with a maximum success probability of 30.9\%. Notably, direct hijacking's success probabilities remain constant at 41.9\%, as sub-prefix direct hijacking depends solely on whether victims accept malicious routes from hijackers, irrespective of the targets.

\smallskip
\noindent\textit{\underline{Takeaway 3:} While stealthy hijacking risk mostly opposes the overall risk trend across ASes, its diminishing gain is eventually suppressed as ROV's restrictions on attackers prevail. Besides, the risk is more case-specific across victims but more evenly distributed across targets.}

\begin{figure}[t]
\setlength{\belowcaptionskip}{-2mm}
    \centering
    \includegraphics[width=\linewidth]{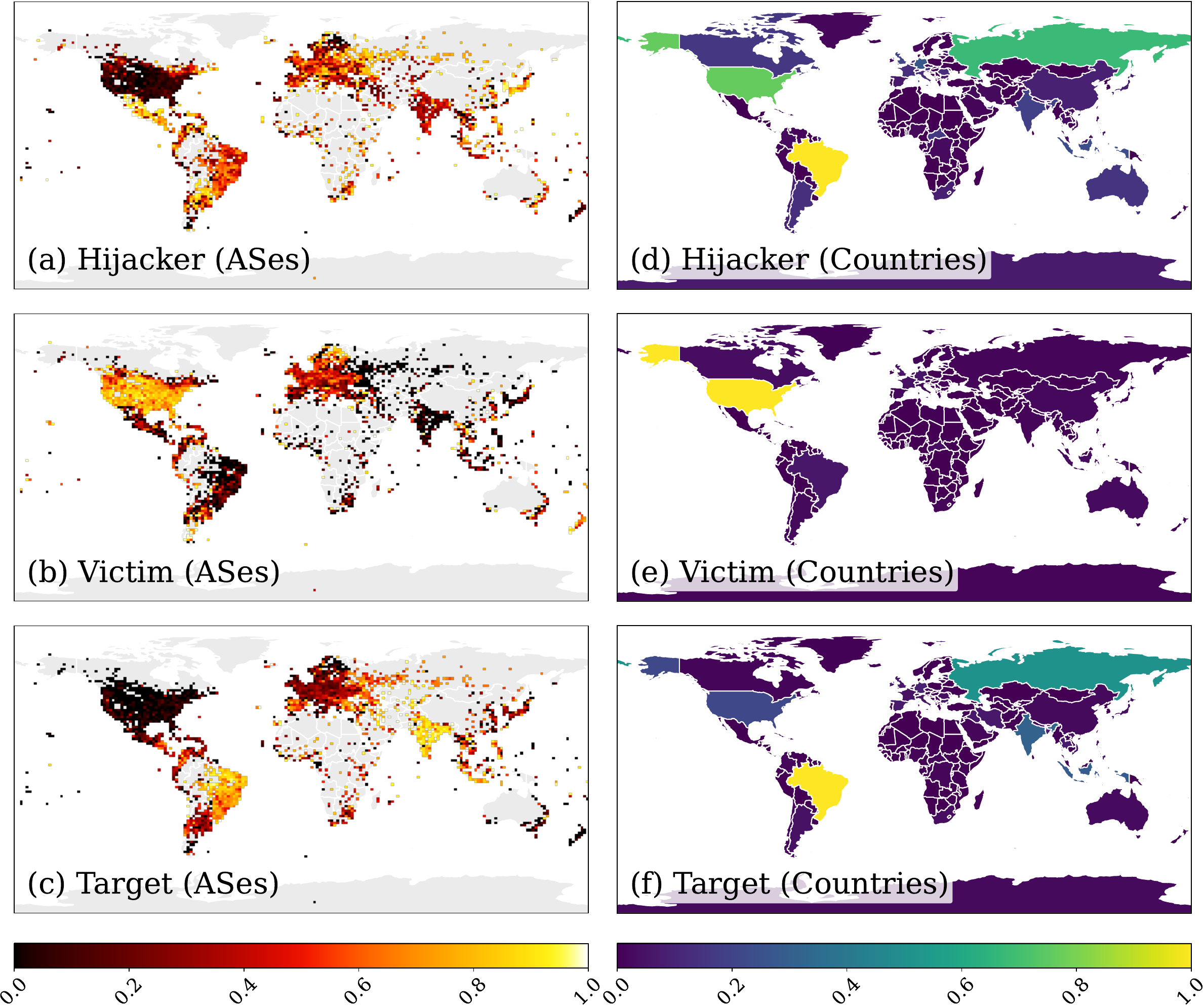}
    \caption{Geographic distribution of stealthy hijacking risk.}
    \label{fig:distribution-over-geo}
\end{figure}

\smallskip
\noindent\textbf{Distribution across Geolocations.}
We now look into the geographic distribution of stealthy hijacking risk, using the MaxMind GeoLite2 dataset~\cite{geolite2} for AS geolocation. Figure~\ref{fig:distribution-over-geo}(a)-(c) map all ASes globally, shaded by the ratio of hijacking instances involving them as hijackers, victims, and targets, respectively. We observe that the most capable potential hijackers cluster in Europe, South America (especially Brazil), and North America (notably Mexico), while victims are mainly located in North America, with the US being the most affected. Meanwhile, the most targeted ASes concentrate in South America (particularly Brazil) and South Asia (notably India). These observations likely stem from differences in regional Internet connectivity. Figure~\ref{fig:distribution-over-ases}(d)-(f) further show these role-specific ratios averaged over each country's total hijacking instances, capturing ASes' relative role tendency per country. We emphasize that this metric reflects statistical role-country correlation, rather than actual or intentional behavior of any country. Besides earlier observations, we find that ASes in the US and Russia are more likely (and capable) to act as hijackers if involved in stealthy hijacking, while both countries also exhibit a considerable fraction of ASes at risk as targets.

\smallskip
\noindent\textit{\underline{Takeaway 4:} ASes most effective in launching stealthy hijacking are mainly in Europe, South America, and North America; victim-prone ASes are mainly in North America; and target-prone ASes are mainly in South America and South Asia.}

\begin{figure}[t]
\setlength{\belowcaptionskip}{-2mm}
    \centering
    \includegraphics[width=.95\linewidth]{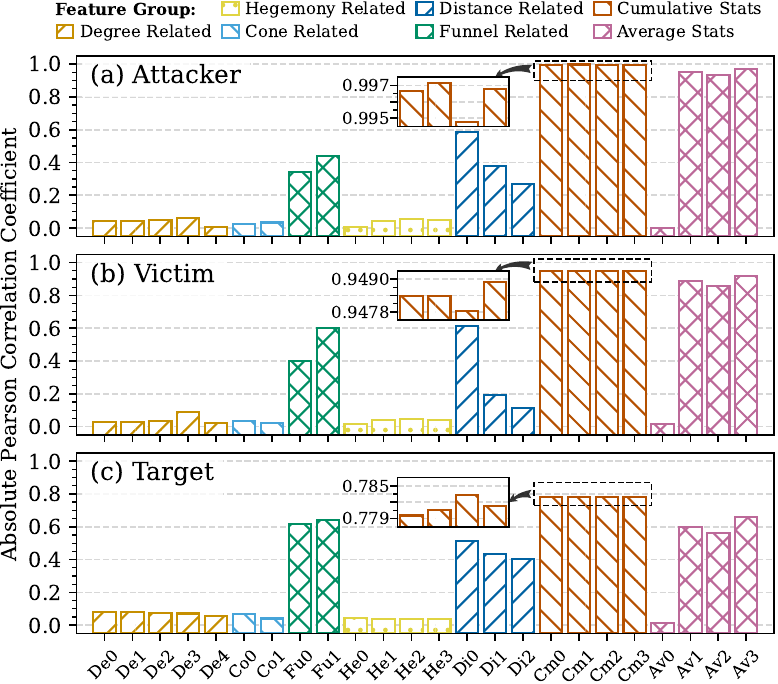}
    \caption{Feature correlation with stealthy hijacking risk.}
    \label{fig:overall-correlation}
\end{figure}

{\renewcommand{\arraystretch}{0.9}
\begin{table*}[t]
\caption{Topological features used to analyze factors influencing stealthy hijacking risk.}
\label{tab:topological-features}
\scriptsize
\centering
\resizebox{\linewidth}{!}{
    \begin{tabular}{lll}
        \toprule
        \textbf{Feature Group} & \textbf{Feature Name} & \textbf{Description}  \\
        \midrule
        \multirow{5}{*}{Degree Related} & De0 (Node Degree)  & Number of edges (AS relationships) connected to the node (AS). \\
        & De1 (Out Degree) & Number of outbound edges (P2C/P2P relationships) from the node (AS). \\
        & De2 (In Degree) & Number of inbound edges (C2P/P2P relationships) to the node (AS). \\
        & De3 (Provider Degree) & Number of direct providers of the AS.\\
        & De4 (Customer Degree) & Number of direct customers of the AS.\\
        \midrule
        \multirow{2}{*}{Cone Related} & Co0 (Customer Cone Size)  & Number of direct or indirect customers of the AS. \\
        & Co1 (ROV Cone Size) & Number of direct or indirect ROV-enabled customers of the AS. \\
        \midrule
        \multirow{2}{*}{Funnel Related} & Fu0 (Provider Funnel Size)  & Number of direct or indirect providers of the AS. \\
        & Fu1 (ROV Funnel Size) & Number of direct or indirect ROV-enabled providers of the AS. \\
        \midrule
        \multirow{4}{*}{Hegemony Related} & He0 (AS Hegemony) & Ratio of routes that traverse the AS. \\
        & He1 (ROV Hegemony) & Ratio of routes traversing both ROV-enabled ASes and the given AS. \\
        & He2 (Pre-ROV Hegemony) & Ratio of routes traversing ROV-enabled ASes before the given AS. \\
        & He3 (Post-ROV Hegemony) & Ratio of routes traversing ROV-enabled ASes after the given AS. \\
        \midrule
        \multirow{3}{*}{Distance Related} & Di0 (Minimum ROV Distance) & Minimum path length from the AS to any reachable ROV-enabled AS. \\
        & Di2 (Maximum ROV Distance) & Maximum path length from the AS to any reachable ROV-enabled AS. \\
        & Di3 (Average ROV Distance) & Average path length from the AS to all its reachable ROV-enabled ASes \\
        \midrule
        \multirow{4}{*}{Cumulative Statistics} & Cm0 (Cumulative AS Hegemony) & Sum of He0 (AS Hegemony) over ASes reachable by the given AS within malicious reach. \\
        & Cm1 (Cumulative ROV Hegemony) & Sum of He1 (ROV Hegemony) over ASes reachable by the given AS within malicious reach. \\
        & Cm2 (Cumulative Pre-ROV Hegemony) & Sum of He2 (Pre-ROV Hegemony) over ASes reachable by the given AS within malicious reach. \\
        & Cm3 (Cumulative Post-ROV Hegemony) & Sum of He3 (Post-ROV Hegemony) over ASes reachable by the given AS within malicious reach. \\
        \midrule
        \multirow{4}{*}{Average Statistics} & Av0 (Average AS Hegemony) & Average of He0 (AS Hegemony) over ASes reachable by the given AS within malicious reach. \\
        & Av1 (Average ROV Hegemony) & Average of He1 (ROV Hegemony) over ASes reachable by the given AS within malicious reach. \\
        & Av2 (Average Pre-ROV Hegemony) & Average of He2 (Pre-ROV Hegemony) over ASes reachable by the given AS within malicious reach. \\
        & Av3 (Average Post-ROV Hegemony) & Average of He3 (Post-ROV Hegemony) over ASes reachable by the given AS within malicious reach. \\
        \bottomrule
    \end{tabular}
}
\end{table*}
}

\smallskip
\noindent\textbf{Influencing Factors.}
To understand how Internet topology affects stealthy hijacking risk, we examine a broad range of topological features and evaluate their statistical correlation with the risk. \morerevised{As shown in Table~\ref{tab:topological-features}}, we consider 24 features, including both basic metrics and composite statistics, grouped into seven categories: degree related (De0-De4), cone related (Co0/Co1), funnel related (Fu0/Fu1), hegemony related (He0-He3), distance related (Di0-Di2), cumulative statistics (Cm0-Cm3), and average statistics (Av0-Av3). Degree and distance related features reflect AS connectivity and centrality in the Internet topology~\cite{faloutsos1999power}. Cone and funnel related features indicate the capability of ASes to provide and access transit services, respectively~\cite{luckie2013relationships,giotsas2014inferring,prehn2024kirin}. Hegemony related features measure AS interdependencies in forming routing paths~\cite{fontugne2018thin}. Cumulative and average statistics, derived by aggregating these features over AS neighborhoods, capture broader topological characteristics at various scales.

We use the Pearson Correlation Coefficient (PCC)~\cite{pearson1895vii} to quantify the linear correlation between each feature and stealthy hijacking risk (measured by hijacking success probability). This coefficient, denoted $r\mypart{1}$, ranges from -1 to 1, with values closer to 1 (or -1) indicating stronger positive (or negative) correlation. To capture potential non-linear correlation, given Internet topology's scale-free nature~\cite{faloutsos1999power}, we further apply quadratic regression and compute the PCC between the fitted and observed risk levels. The resulting coefficient, denoted $r\mypart{2}$, reflects the strength of a quadratic fit. For each feature, we report the higher absolute value of $r\mypart{1}$ and $r\mypart{2}$ in Figure~\ref{fig:overall-correlation}. Notably, cumulative statistics exhibit the strongest correlation overall, with $r\mypart{2}$ of Cm1 (Cumulative ROV Hegemony) reaching up to 0.997, 0.948, and 0.782 for hijackers, victims and targets, respectively. Certain funnel related features, \eg Fu1 (ROV Funnel Size), distance related features, \eg Di0 (Minimum ROV Distance), and average statistics, \eg Av1 (Average ROV Hegemony), also exhibit relatively strong correlation with stealthy hijacking risk.

\begin{figure*}[t]
\setlength{\belowcaptionskip}{-3mm}
    \centering
    \includegraphics[width=\linewidth]{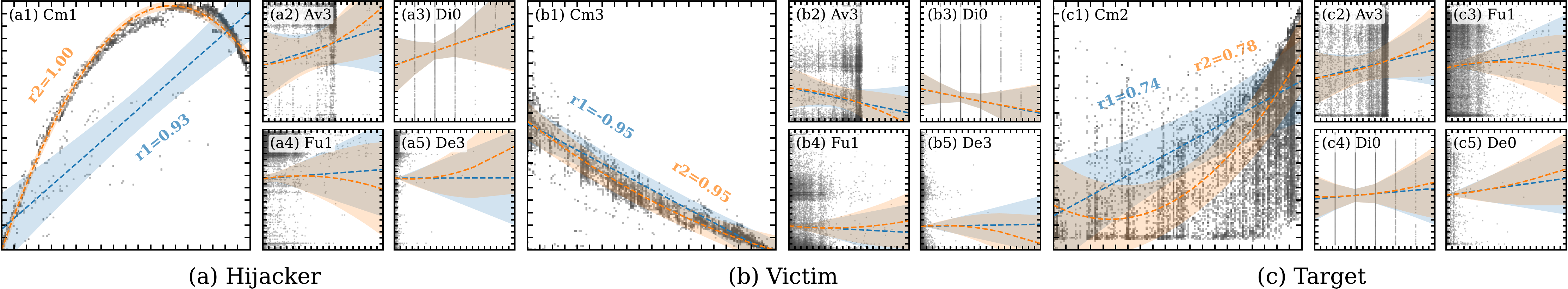}
    \caption{Illustration of statistical correlation between selected features and stealthy hijacking risk.}
    \label{fig:correlation-scatter}
\end{figure*}

To further examine the most correlated features, we plot ASes on a plane, where the X-axis represents the feature value and the Y-axis represents the stealthy hijacking risk level, as shown in Figure~\ref{fig:correlation-scatter}. Fitted values from linear and quadratic regressions are presented with blue lines and orange curves, respectively, with shaded areas indicating estimation errors. The risk levels show a clear quadratic correlation with cumulative statistics (see Figure~\ref{fig:correlation-scatter}(a1), (b1), and (c1)), but less so with other features. Particularly, Cm1 in (a1) measures the fraction of routes traversing any ROV-enabled AS before reaching the given AS's reachable ASes within malicious reach. This statistic reflects the range of potential risk-critical ASes that the given AS as a hijacker can compromise. As this range expands, more ASes become potential targets, yet potential victims become fewer, resulting in the observed parabolic curve. The near-perfect quadratic fit (with $r\mypart{2}$ approaching 1.00) suggests that cumulative statistics are reliable indicators of stealthy hijacking risk. Notably, the opposite concavity of the parabola in (a1) compared to (b1) and (c1) reflects the opposite interpretation of risk levels in terms of different roles, \ie hijacking capability for ASes acting as hijackers, and hijacking susceptibility for those as victims or targets.

\smallskip
\noindent\textit{\underline{Takeaway 5:} Cumulative statistics of AS hegemony show the strongest quadratic correlation with stealthy hijacking risk, making them powerful indicators for predicting risk levels.}

\smallskip
\noindent\textbf{Risk Attribution.}
To better understand how risk-critical ASes and ROV-enabled ASes contribute to the risk, we examine the frequency of stealthy hijacking instances associated with them. We define the risk-critical AS as the first AS along the victim-to-target route that has a route to the hijacker, and the responsible ROV-enabled AS as the first ROV-enabled AS along that path. We attribute all stealthy hijacking instances to 8,323 unique risk-critical ASes and 1,608 unique ROV-enabled ASes. Figure~\ref{fig:risk-attribution} shows the contribution of these ASes to stealthy hijacking risk. The blue curves represent the count of instances associated with each AS, and the orange curves show the cumulative percentage. Dashed lines mark the 80\% cut-off, revealing a pronounced long-tail effect, \ie a vital few account for the majority of the risk. That is, 2.94\% of risk-critical ASes (245 out of 8,323) or 2.24\% of ROV-enabled ASes (36 out of 1,608) are responsible for 80\% of all stealthy hijacking instances. This observation highlights the importance of focusing mitigation efforts on a small set of ASes to effectively reduce the overall risk of stealthy hijacking.

\begin{figure}[t]
\setlength{\belowcaptionskip}{-4mm}
    \centering
    \includegraphics[width=0.95\linewidth]{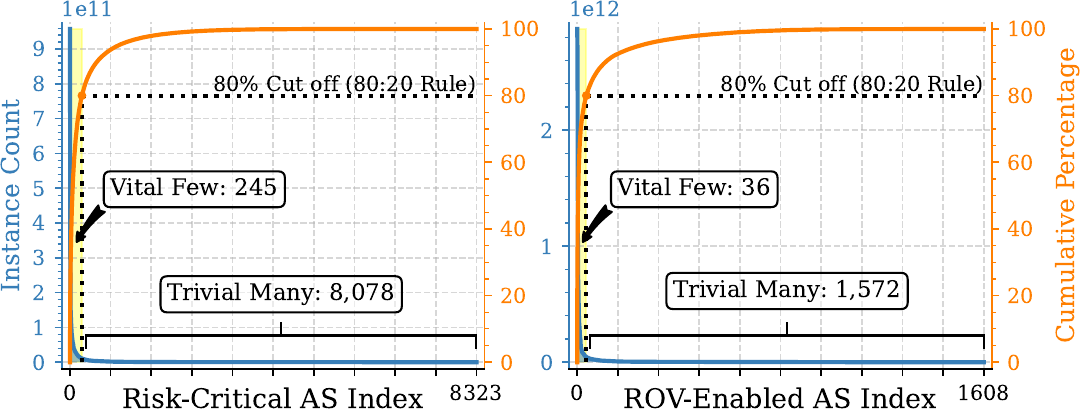}
    \caption{Attribution of stealthy hijacking risk to risk-critical ASes (left) and ROV-enabled ASes (right).}
    \label{fig:risk-attribution}
\end{figure}

\smallskip
\noindent\textit{\underline{Takeaway 6:} A small fraction of risk-critical and ROV-enabled ASes account for the majority of stealthy hijacking risk, calling for focused risk mitigation efforts on these key ASes.}

\smallskip
\noindent\textbf{Evolution Pattern.}
Finally, we analyze how stealthy hijacking risk evolves with increasing ROV deployment, as shown in Figure~\ref{fig:evolution-pattern}. Our preliminary assessment based on data from October 1, 2023 identifies 778 ROV-enabled ASes, which accounts for approximately 1\% of the 75k ASes active at the time. It reports an overall stealthy hijacking success probability (including both sub-prefix and exact-prefix ones) of 0.105. In comparison, data from March 1, 2025 identify 7,275 ROV-enabled ASes (roughly 10\% of all ASes), with the risk rising to 0.145, as reported earlier. This reflects an increase in overall risk alongside real-world ROV deployment. Moreover, we simulate future ROV deployment by progressively relaxing the threshold for identifying ASes as ROV-enabled, and the results suggest that the risk begins to decline as deployment increases. Specifically, a 10\% additional increase in the ROV deployment rate reduces the hijacking success probability to 0.101, and a rate exceeding 40\% reduces the risk to near zero.

\smallskip
\noindent\textit{\underline{Takeaway 7:} The stealthy BGP hijacking risk shows an overall ``rise-then-decline'' evolution pattern along ROV deployment, and we may now be entering the declining phase.}

\begin{figure}[t]
\setlength{\belowcaptionskip}{-2mm}
    \centering
    \includegraphics[width=.9\linewidth]{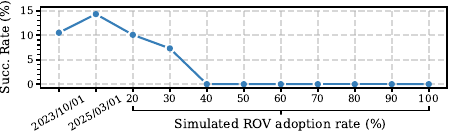}
    \caption{\revised{Stealthy hijacking risk evolution.}}
    \label{fig:evolution-pattern}
\end{figure}
\section{Framework Performance Evaluation}
\label{sec:performance-evaluation}

\subsection{Risk Discovery Effectiveness}

\begin{figure}[t]
\setlength{\belowcaptionskip}{-4mm}
    \centering
    \includegraphics[width=.9\linewidth]{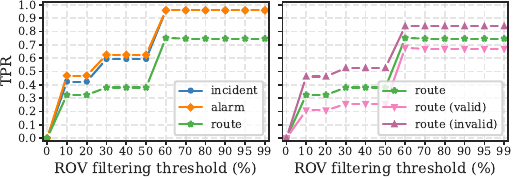}
    \caption{True positive rate of our framework.}
    \label{fig:accuracy-threshold}
\end{figure}

\begin{table}[t]
\newcommand*\emptycirc[1][1ex]{\tikz\draw (0,0) circle (#1);}
\newcommand*\fullcirc[1][1ex]{\tikz\fill (0,0) circle (#1);}
\newcommand\sample{{\protect\tikz[x=3ex,y=1.5ex] \protect\fill [gray!20] (0,0) rectangle (1,1);}\ }
\renewcommand{\arraystretch}{1.0}
    \caption{True positive rate under ablation of ROV sources.}
    \label{tab:rov-ablation}
    \scriptsize
    \centering
\begin{ThreePartTable}
    \begin{tabular}{ccccccccr}
    \toprule
    \multicolumn{3}{c}{\textbf{Sources\textsuperscript{1}}} & \multicolumn{5}{c}{\textbf{True Positive Rate}\textsuperscript{2}} & \multirow{2}[2]{*}{\textbf{\#ASes}} \\
    \cmidrule(lr){1-3}
    \cmidrule(lr){4-8}
    \textbf{A} & \textbf{R} & \textbf{C} & \textbf{Incident} & \textbf{Alarm} & \textbf{Route} & \textbf{Rt.V.} & \textbf{Rt.Iv.} & \\
    \midrule
    \rowcolor{gray!20}
\fullcirc[0.5ex]&\fullcirc[0.5ex]&\fullcirc[0.5ex]&\textbf{0.9591}&\textbf{0.9597}&\textbf{0.7521}&\textbf{0.6782}&0.8400&7,275 \\
\emptycirc[0.5ex]&\fullcirc[0.5ex]&\fullcirc[0.5ex]&0.7862&0.8012&0.6290&0.4519&0.8400&6,725 \\
\fullcirc[0.5ex]&\emptycirc[0.5ex]&\fullcirc[0.5ex]&0.2767&0.2882&0.4201&0.0971&0.8048&2,668 \\
\fullcirc[0.5ex]&\fullcirc[0.5ex]&\emptycirc[0.5ex]&0.8019&0.8184&0.5969&0.3902&0.8431&7,209 \\
\fullcirc[0.5ex]&\emptycirc[0.5ex]&\emptycirc[0.5ex]&0.2767&0.2882&0.4201&0.0971&0.8048&2,575 \\
\emptycirc[0.5ex]&\fullcirc[0.5ex]&\emptycirc[0.5ex]&0.6164&0.6484&0.5202&0.2492&0.8431&6,655 \\
\emptycirc[0.5ex]&\emptycirc[0.5ex]&\fullcirc[0.5ex]&0.0629&0.0576&0.4656&0.0169&\textbf{1.0000}&165 \\
\emptycirc[0.5ex]&\emptycirc[0.5ex]&\emptycirc[0.5ex]&0.0618&0.0720&0.4449&0.0166&0.9551&1,000 \\
    \bottomrule
    \end{tabular}
\begin{tablenotes}
    \scriptsize
    \item[1] A, R, and C denote APNIC, RoVista, and Cloudflare, resp. The row in \sample is ours. The row without any source randomly selects 1,000 ASes as ROV-enabled. 
    \item[2] Rt.V. and Rt.Iv. indicate valid and invalid routes, resp. Highest values are in bold. 
\end{tablenotes}   
\end{ThreePartTable}
\end{table}

\begin{figure}[t]
\setlength{\belowcaptionskip}{-4mm}
    \centering
    \includegraphics[width=.9\linewidth]{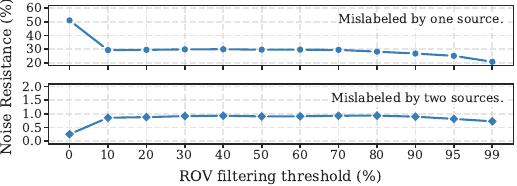}
    \caption{Noise resistance of multi-source ROV input.}
    \label{fig:noise-resistance}
\end{figure}

When evaluating the effectiveness of \ours in risk discovery, we use the curated real-world incident dataset described in \S\ref{sec:empirical-study} as the ground truth. 
The dataset contains 318 high-confidence stealthy hijacking incidents captured in the wild. Each incident consists of one or more related alarms triggered by the same misbehaving origin, and each alarm includes one or more pairs of routes indicating stealthy hijacking to the same prefix. In total, the dataset contains 347 alarms and 2,178 routes. As the best-effort ground truth contains only positive instances\footnote{It is infeasible to establish negative ground, because it is inherently difficult to find a case and tell whether it has not occurred or cannot occur.}, we report the True Positive Rate (TPR) as the main metric, reflecting how effectively \ours infers real-world instances.

We begin by measuring \ours's TPR at the route level. As per the strict heuristics in \S\ref{sec:empirical-study}, each ground truth route pair includes one RPKI-valid and one RPKI-invalid route. For both routes, we extract the vantage point and origin AS to look up the corresponding route inferred by \ours. If the inferred route pair also satisfies the strict heuristics, meaning that \ours's inference matches the ground truth, we consider both routes are effectively inferred. Route-level TPR is defined as the fraction of such effectively inferred routes. We further measure alarm-level TPR as the fraction of alarms with at least one route pair effectively inferred, and incident-level TPR as the fraction of incidents containing at least one such alarm.

We present \ours's TPR across varying ROV filtering thresholds in Figure~\ref{fig:accuracy-threshold} (left). As the threshold increases, fewer but more confident ROV-enabled ASes are selected, and the TPR increases steadily. Beyond the 0.6 threshold, inference remains consistently strong, reaching 0.959, 0.960, and 0.752 at the incident, alarm, and route levels, respectively. This means that over 95\% of these real-world incidents are successfully flagged risky by \ours. Figure~\ref{fig:accuracy-threshold} (right) further breaks down route-level TPR by RPKI validity. As expected, \ours is slightly less effective on RPKI-valid routes due to the stealthiness requirement (condition 3 in \S\ref{sec:heuristics}) that applies exclusively to them. Still, \ours achieves up to 0.678 TPR on these routes. Overall, these results demonstrate the fidelity of \ours's analytical output and inform a recommended threshold range, which justifies our default choice of 0.8.

\subsection{Input Ablation and Robustness}

We examine how incorporating multiple ROV measurement sources improves \ours's effectiveness and robustness. Table~\ref{tab:rov-ablation} reports \ours's TPR under ablation of the three sources, with the ROV filtering threshold fixed at 0.8. In general, excluding any source leads to a noticeable drop in the coverage of ROV-enabled ASes and a corresponding decline in \ours's effectiveness, except that using only the Cloudflare source yields the highest TPR on RPKI-invalid routes. This is because Cloudflare contributes only 165 ROV-enabled ASes, which, according to \ours's assessment, have limited effect in blocking propagation of RPKI-invalid routes. As a result, \ours's inference overestimates observation of RPKI-invalid routes and thus aligns well with the ground truth. However, this setting is impractical and results in just 0.0169 TPR on RPKI-valid routes and 0.0629 at the incident level. After all, we conclude that integrating multiple reliable sources to obtain a more complete view of ROV deployment is crucial to achieve effective stealthy hijacking risk assessment.

We further look into \ours's robustness to noise in the three ROV measurements. We simulate scenarios where one or two sources randomly mislabel an ROV-enabled AS as non-ROV-enabled, and measure the probability that such ASes remain included in the final ROV set. The results are shown in Figure~\ref{fig:noise-resistance}. When all three sources are used, \ours retains a 20\%-50\% probability of preserving an AS despite a mislabel from a single source (see the top panel), and about a 1.0\% probability when two sources mislabel the same AS. By comparison, using only one source result in zero probability of noise resistance, as any mislabel is directly accepted to the final ROV set. These results emphasize the robustness benefit of \ours's design using multiple ROV sources as input.

\subsection{Runtime Overhead}

For comparison, we select BGPsim~\cite{brandt2021optimized}, a highly optimized BGP simulator widely used in prior studies~\cite{hlavacek2022behind,hlavacek2022smart,brandt2021evaluating}. We test \ours and BGPsim on the same Internet topology derived from CAIDA AS relationship data. Figure~\ref{fig:runtime-performance} reports their runtime overhead in generating routes within random AS subsets ranging from 10 to 100 ASes. Each run is repeated 10 times, with error bars indicating the 95\% confidence interval. The top panel compares the runtime of \ours (in a single-thread CPU setting) with BGPsim's, with \ours over 40 times faster in the worst case. The bottom panel evaluates \ours's runtime under different settings (1, 20, and 40 CPU threads, or a single GPU). The settings with 40 CPU threads or a single GPU show the best performance, completing route generation across 75,846 ASes in 5.22 hours with peak memory usage under 20 GiB. In contrast, an exponential fit of BGPsim's overhead estimates 110 days for full route generation. As such, \ours achieves a 500-fold speedup, providing the efficiency necessary for comprehensive Internet-scale stealthy hijacking risk assessment.

\begin{figure}[t]
\setlength{\belowcaptionskip}{-4mm}
    \centering
    \includegraphics[width=.85\linewidth]{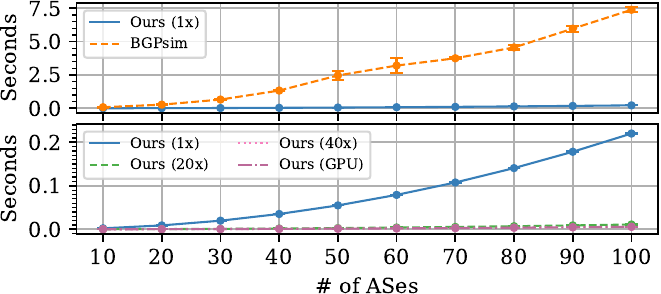}
    \caption{Runtime overhead of route inference.}
    \label{fig:runtime-performance}
\end{figure}
\section{Discussion}

\noindent\textbf{Complex Routing Policy.}
Our framework infers BGP routes based on the established Gao-Rexford model~\cite{gao2001stable,brandt2021optimized,hlavacek2022behind,hlavacek2022smart}. While it cannot capture all real-world nuances, \eg hybrid or partial-transit relationships~\cite{giotsas2014inferring} and selective ROV filtering~\cite{cloudflare}, such cases are rare, \eg only seven ASes exhibit selective filtering~\cite{cloudflare}. Our evaluation on real incidents confirms the framework's reliability despite these complexities. The design is also extensible: additional bytes can encode nuanced relationships, and selective filtering can be modeled by pruning certain links. We leave these to future work.

\noindent\textbf{Mitigation Strategies.}
Increasing ROV deployment remains central to mitigating stealthy hijacking, as a sufficiently high adoption rate can suppress the risk (see Takeaway 3). ROV++\cite{morillo2021rov++}, which extends ROV with proactive rerouting and blackholing upon detecting invalid routes, can also mitigate the threat. Besides, detecting routing table discrepancies across vantage points, as demonstrated in \S\ref{sec:empirical-study}, enables timely alert of ongoing incidents. Operational practices such as announcing /24 IPv4 or /48 IPv6 prefixes also significantly limits the sub-prefix risk. Furthermore, information-sharing platforms, \eg mailing-lists where ROV-adopters publish digests of dropped routes, help potential victims identify risk-critical ASes. A similar routing policy was described in Cisco's patent~\cite{heitz2018poison}.

\noindent\textbf{Incident Intent.}
Identifying intent behind BGP incidents, \ie whether malicious or not, is inherently challenging. We attempt systematic attribution methods (see Table~\ref{tab:incident-tags}) and manual investigation (see Appendix~\ref{sec:appendix:case-study}). While these help narrow down likely causes, definitive confirmation requires engagement with network operators, which we attempted in selected cases. Cross-layer analysis with external intelligence may offer further insights, which we leave as future work. However, regardless of intent, the operational impact of stealthy hijacking remains significant. Unintentional misconfigurations can lead to the same consequences as deliberate attacks.
\section{Related Work}
\label{sec:related-work}

\noindent\textbf{Stealthy Hijacking Analysis.}
Prior work mainly focuses on non-ROV-related stealthy hijacking based on short-lived routes~\cite{vervier2015mind}, AS-path poisoning~\cite{milolidakis2023effectiveness}, or BGP communities~\cite{birge2025global}. To our knowledge, ROV++~\cite{morillo2021rov++} is the only recent work addressing ROV-related stealthy hijacking. It extends ROV with proactive rerouting and blackholing to mitigate the threat, showing promising benefits at early adoption. Particularly, ROV++ adopters can effectively secure all routes traversing them. If AS B in Figure~\ref{fig:problem-statement} deploys ROV++, it drops the invalid route and seeks an alternative route in its RIB that avoids ASes in the invalid route (\ie AS C, F, and G), thus avoiding the risk-critical ASes. If such route is not available, it blocks traffic to the prefix to prevent hijacking. However, in certain mixed-deployment scenarios, \eg when AS B deploys ROV and AS A deploys ROV++, stealthy hijacking is still possible because the malicious route never reaches AS A, thus failing to trigger the countermeasures in ROV++. ROV++ focuses on mitigation strategies and does not provide real-world evidence or systematic assessment of the threat’s prevalence and impact. In contrast, our work aims at real-world stealthy hijacking discovery and systematic risk assessment, addressing this critical gap. Moreover, our risk attribution analysis complements ROV++ by identifying where its deployment would be most effective. For example, the 36 ROV-enabled ASes in Figure~\ref{fig:risk-attribution}, which are mostly Tier-1 ASes and account for 80\% of risk instances, are prime candidates for deploying ROV++.

\noindent\textbf{ROV Deployment Measurement.}
ROV deployment is typically measured by analyzing ASes' data-plane reachability to RPKI-valid and invalid prefixes~\cite{gilad2016we,testart2020filter,hlavacek2018practical,apnic,rodday2021revisiting,cloudflare,li2023rovista,hlavacek2023keep,chen2022rov}. These studies vary in probing sources and prefix selection. For example, RoVista~\cite{li2023rovista} uses IPID side channels; APNIC~\cite{apnic} conducts large-scale probing via its infrastructure; and Cloudflare~\cite{cloudflare} hosts test sites and crowdsources measurements. Despite diverse techniques, each offers only partial coverage due to scalability limits. To improve visibility, we consolidate these three representative sources~\cite{li2023rovista,apnic,cloudflare} for a more comprehensive view of ROV deployment.

\noindent\textbf{BGP Route Inference.}
Existing BGP route inference methods are either simulation-driven~\cite{brandt2021optimized,furuness2023bgpy,caesar2005bgp,bphs,dimitropoulos2006efficient} or data-driven~\cite{li2023realizing,madhyastha2009iplane,li2020probinfer,cunha2016sibyl,tao2015path}. Simulation-driven methods simulate the route exchange using heuristics such as the Gao-Rexford model~\cite{gao2001inferring}. For example, Brandt et al.~\cite{brandt2021optimized} implement BGPsim with bi-directional search to improve performance. Data-driven methods infer AS-level paths from measurements. For example, Cunha et al.~\cite{cunha2016sibyl} design a traceroute-based prediction system. 
While effective at certain scales, these methods cannot generate complete Internet-scale routes efficiently. Our work, by contrast, computes all AS-level routes within a few hours.
\section{Conclusion}

In this paper, we develop effective heuristics to discover stealthy BGP hijacking and conduct the first empirical study to track it in the wild, contributing a curated dataset and a monitoring service. To assess the risk comprehensively, we further design \ours, a framework that integrates multiple sources for accurate ROV deployment, leverages matrix operations to infer Internet-wide routes, and enables systematic risk analysis via a 3-tuple model. It generates Internet-scale routes within hours and achieves 95.9\% incident-level accuracy. Assessing over 8.3 billion routes reveals a 14.1\% success probability for stealthy hijacking, with targeted attacks reaching up to 99.5\%.

\section*{Ethics Considerations}

We carefully consider several ethical aspects to ensure that our study adheres to established ethical standards. Our study only uses publicly available data, and we strictly comply with all terms of use. Our study does not disclose any personally identifiable information or private routing policies beyond what is already publicly available. We do not perform any large-scale active probing or interfere with live routing systems, thereby ensuring no impact on real-world traffic or network stability. The real-world incidents captured during our study are responsibly disclosed to relevant network operators.


\section*{Acknowledgment}
We sincerely thank our Shepherd and all anonymous reviewers for their valuable comments. This work is supported in part by NSFC under Grant 62132011, Grant 62472247, and Grant 62425201. Qi Li is the corresponding author of the paper.


\bibliographystyle{IEEEtran}
\bibliography{ref}

\appendices
\section{Case Study: Stealthy Hijacking on 203.127.225.0/24}
\label{sec:appendix:case-study}

To verify the discovered stealthy hijacking incident targeting 203.127.225.0/24 (depicted in Figure~\ref{fig:case-study}), we manually investigate AS37100's control-plane visibility and data-plane reachability based on first-hand observations from its looking glass\footnote{\url{https://lg.seacomnet.com/}} ``lg-01-ams.nl''. All observations presented below were captured on February 10, 2025.

We begin by examining AS37100's route to the prefix 203.127.0.0/16. As shown in Figure~\ref{fig:looking-glass-1}, running the command ``show ip bgp 203.127.0.0/16'' on the looking glass reveals that AS37100 has a route to the prefix announced by the legitimate origin AS3758, via the AS path 37100 6762 6461 7473 3758. This confirms that the legitimate route is visible to AS37100 on the control plane, aligning with our expectations.

Next, we investigate AS37100's control-plane visibility of the sub-prefix 203.127.225.0/24. As shown in Figure~\ref{fig:looking-glass-2}, executing ``show ip bgp 203.127.225.0/24'' on the looking glass returns no matching routes, indicating that AS37100 does not accept the bogus route announcement from the unauthorized origin AS17894. This corroborates our analysis that the bogus route is not visible to the victim on the control plane.

To obtain the actual data-plane forwarding path, we perform traceroute probing from AS37100 to 203.127.225.0/24. As shown in Figure~\ref{fig:looking-glass-3}, the resulting per-hop data-plane path reveals that the last two hops (line 13 and 14) belong to AS17894. This confirms that traffic from AS37100 to the sub-prefix is indeed diverted to the illegitimate origin, demonstrating actual hijacking at the data-plane level.

Taken together, these observations provide strong evidence of a stealthy BGP hijacking incident, where AS37100’s traffic is misrouted despite its control-plane filtering. However, we emphasize that the intent behind this incident remains uncertain. Given our broader investigation showing that \first AS17894's parent organization, \emph{Innove Communications}, is a subsidiary of \emph{Globe Telecom}~\cite{wikiglobetelecom}, \second AS3758's parent organization, \emph{SingNet}, is operated by \emph{SingTel}~\cite{wikisingtel}, and \third \emph{SingTel} is the principal shareholder of \emph{Globe Telecom}~\cite{globetelecomshareholder}, we suspect that this incident, despite manifesting as stealthy hijacking, is likely the result of overlooked misconfigurations rather than a deliberate attack. As of this writing, we are awaiting confirmation from \emph{Globe Telecom}. Note that this does not diminish the value of the case study, as misconfigurations can cause stealthy hijacking just like intentional attacks.

\begin{figure}[t]
\setlength{\belowcaptionskip}{-2mm}
    \centering
    \includegraphics[width=\linewidth,cfbox=black 0.5pt 0.5pt]{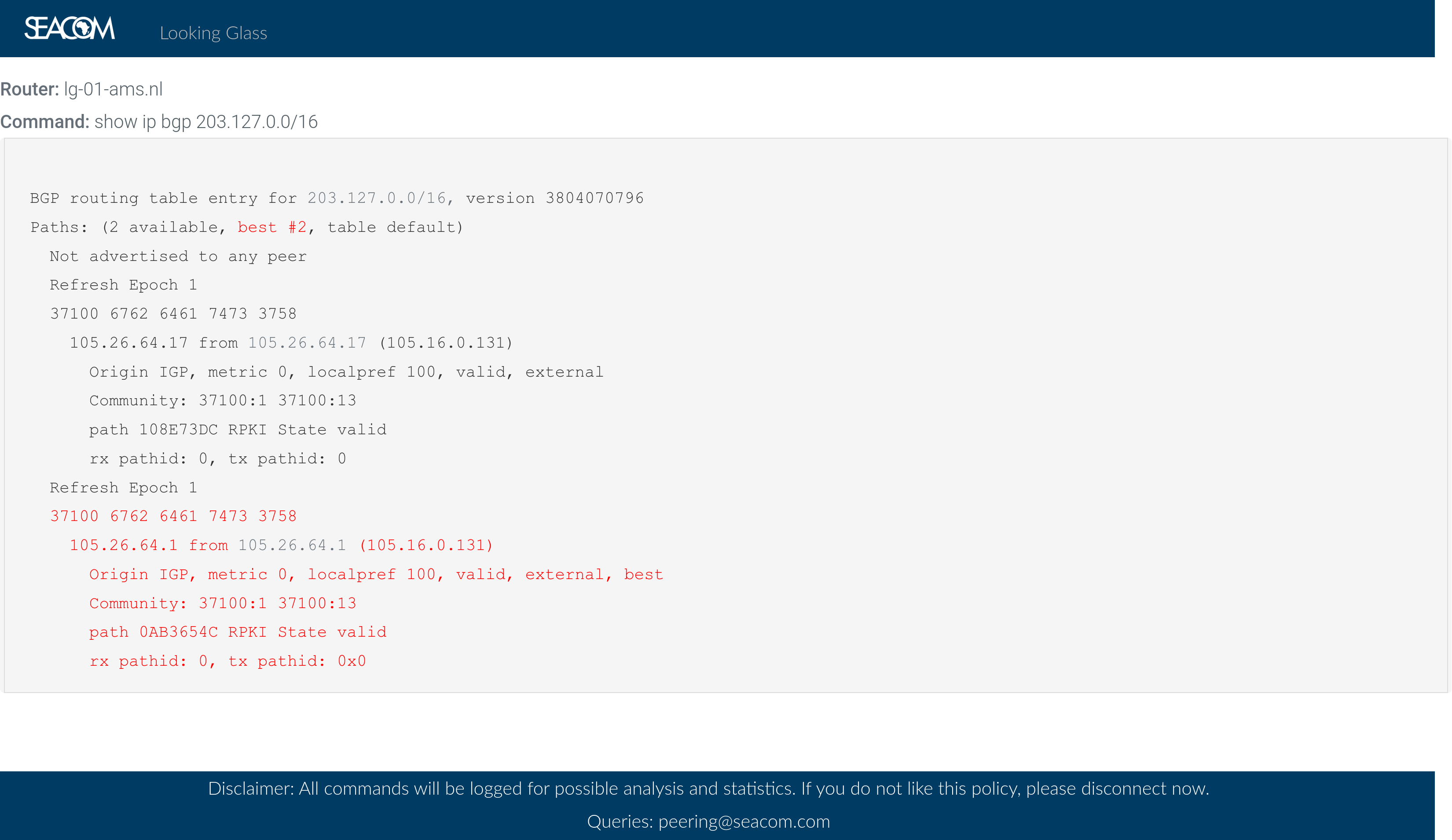}
    \caption{AS37100's routes to 203.127.0.0/16.}
    \label{fig:looking-glass-1}
\end{figure}

\begin{figure}[t]
\setlength{\belowcaptionskip}{-2mm}
    \centering
    \includegraphics[width=\linewidth,cfbox=black 0.5pt 0.5pt]{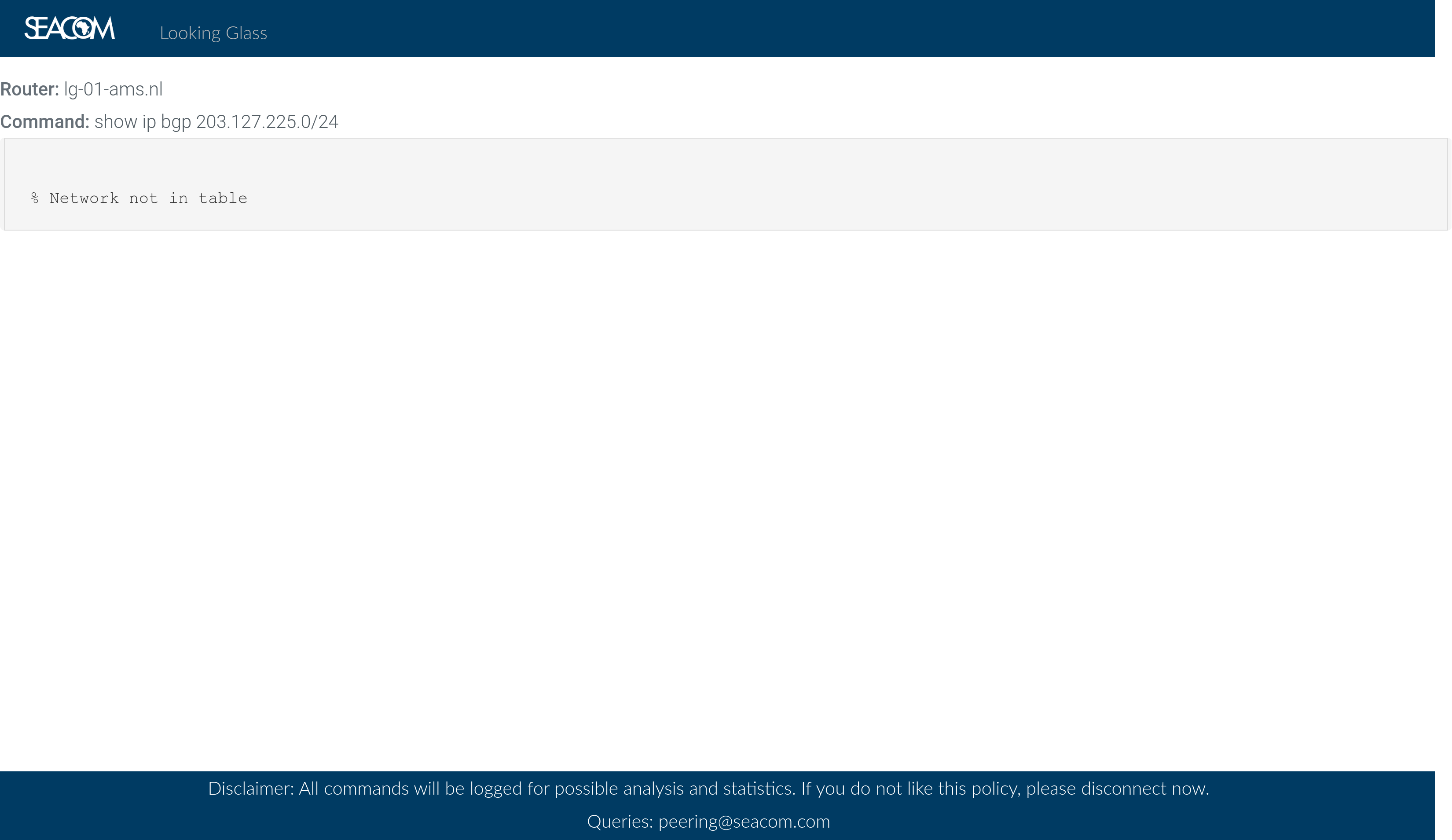}
    \caption{AS37100's routes to 203.127.225.0/24.}
    \label{fig:looking-glass-2}
\end{figure}

\begin{figure}[t]
\setlength{\belowcaptionskip}{-2mm}
    \centering
    \includegraphics[width=\linewidth,cfbox=black 0.5pt 0.5pt]{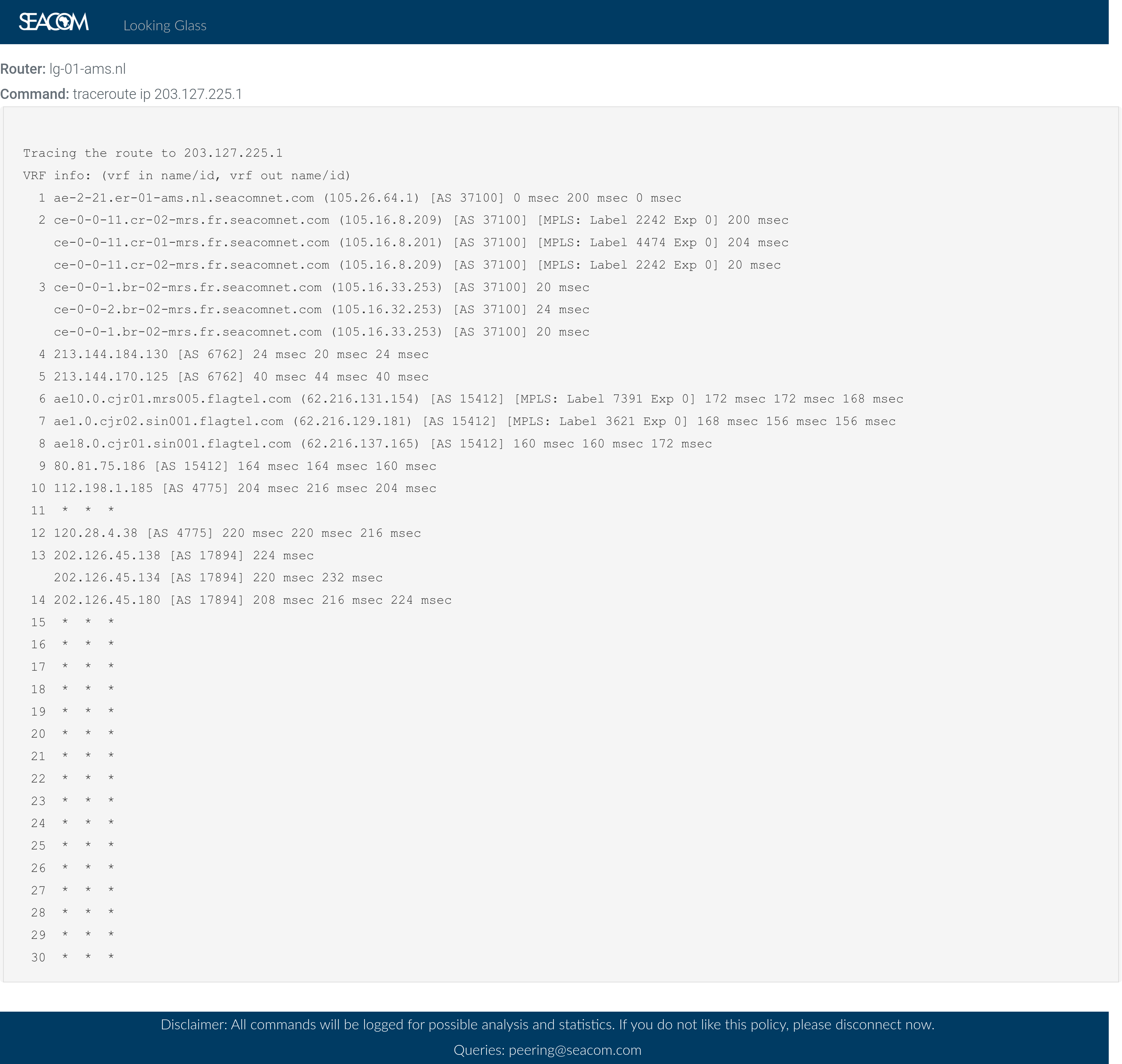}
    \caption{Traceroute from AS37100 to 203.127.225.1.}
    \label{fig:looking-glass-3}
\end{figure}

\section{Topology Compression Method}
\label{sec:appendix:topology-compression}

Given that the route inference complexity grows quadratically with topology size, \ours applies a topology compression method to reduce the topology size after obtaining the Internet topology. The compression is based on the concept of \emph{branch}. Specifically, a branch is defined as a sequence of ASes starting from a stub AS (\ie an AS with no neighbor other than a single provider) and recursively tracing the subject AS's single provider until encountering an AS with more than one provider, more than one customer, or any peers. This AS, leading to the end of the recursion, is denoted as the \emph{access AS} of the branch. The key intuition behind \ours's topology compression strategy is that all routes from branch ASes to non-branch ASes must traverse consecutive C2P links and pass through the corresponding access AS before entering the broader Internet, and vice versa. As a result, the routing tables of branch ASes can be fast computed based on the access AS's routing table, \eg by concatenating a certain slice of the branch. Similarly, the routes towards branch ASes can also be fast computed based on the routes towards the access AS. Besides, routes between ASes within the same branch are trivial. Thus, the routes regarding branch ASes can all be fast computed without involving the branch ASes in the BGP route inference process. Therefore, \ours identifies all branches in the Internet topology, establishes their correspondence with access ASes, and prunes the branches to capture the remaining topology, referred to as the core topology. BGP route inference is performed only on the core topology or its sub-topologies, and routes regarding branch ASes are efficiently computed afterwards based on the inference results. In practice, this compression method reduces the topology size by 36.3\% in terms of vertices and 3.97\% in terms of edges, with a total of 27,775 branches left out.

\section{Non-Branch \textasciitilde PL Computation Validity}
\label{sec:appendix:non-branch-validity}

The two-branch \textasciitilde PL field computation described by Equation~\eqref{eq:pl-update} can be effectively replaced by the branchless computation described by Equation~\eqref{eq:pl-update-non-branch} without affecting the integrity of the results. Here, we prove the validity of this replacement.

\begin{theorem}
Equations~\eqref{eq:pl-update} and \eqref{eq:pl-update-non-branch} produce identical inference results after each iteration.
\end{theorem}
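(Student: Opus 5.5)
The plan is to prove the statement by induction on the iteration count $T$. The claim to establish is that, starting from the same $P^T$, computing the candidate bytes $\hat{p}_{ikj}^{T\textsl{+1}}$ with Equation~\eqref{eq:pl-update} or with Equation~\eqref{eq:pl-update-non-branch} yields the same maximum in Equation~\eqref{eq:byte-selection-simplified}, and hence the same $P^{T\textsl{+1}}$. It would also give the same next-hop matrix $N^{T\textsl{+1}}$, provided maxima are attained only by valid candidates and ties are broken by the same rule. The key observation is that the two formulas can only disagree on candidates whose LP field is $(00)_2$, or in the diagonal case $k=i$.

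The induction hypothesis, besides equality of $P^T$, is an invariant: every cell of $P^T$ with LP $=(00)_2$ equals the placeholder $(00\ 111111)_2$, and every valid cell has path length at most 63. This holds for $P^\textsl{0}$ by construction.

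The first step is a case split on the pair $(i,k)$ using $L$ from Equation~\eqref{eq:L-definition}, where $\text{PL}[l_{ik}]$ is $0$ for $i=k$, $1$ for an existing link, and $0$ for the placeholder.
\begin{itemize}
\item If $i\neq k$ and a link exists, then $\text{PL}[l_{ik}]=1$. Whenever the $\odot$ output is non-zero, both formulas give $\text{\textasciitilde PL}[p_{kj}^T]-1$, so the candidates coincide.
\item If the $\odot$ output is $(00)_2$, for instance because there is no link, a valley-free violation occurs, or $p_{kj}^T$ is unknown, Equation~\eqref{eq:pl-update} gives the placeholder. Equation~\eqref{eq:pl-update-non-branch} instead gives some byte with LP $=(00)_2$, possibly with a reduced \textasciitilde PL. Either way the byte is at most $(00\ 111111)_2$, and therefore strictly smaller than any candidate with a non-zero LP. Such candidates can affect the maximum only if \emph{all} candidates have LP $=(00)_2$.
\end{itemize}

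The second step handles that all-$(00)_2$ situation. Here the diagonal term $k=i$ is used. Since $\text{PL}[l_{ii}]=0$, the branchless formula returns exactly $\text{\textasciitilde PL}[p_{ij}^T]$. If $p_{ij}^T$ is the placeholder, the invariant makes this candidate exactly $(00\ 111111)_2$, so the maximum is the placeholder in both versions. This also re-establishes the invariant for $P^{T\textsl{+1}}$.

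The remaining, and in my view the main, obstacle is the diagonal candidate when $p_{ij}^T$ is valid. For a customer route (LP $=(11)_2$), $11\odot 11=11$. The branchless formula then reproduces $p_{ij}^T$ exactly, whereas the two-branch formula produces $p_{ij}^T$ with one extra hop. For peer or provider routes, $11\odot$ LP gives $(00)_2$, so the diagonal candidate is dominated in both versions. To show that the maximum is nevertheless unaffected, I would invoke the fact, stated in the paper and proved in Appendix~\ref{sec:appendix:best-route-selection-simplification}, that some neighbor $k$ yields $\hat{p}_{ikj}^{T\textsl{+1}}\ge p_{ij}^T$. Concretely, $k$ is the recorded next hop $N^T_{ij}$, whose own route to $j$ has not worsened, by monotonicity of the per-cell maxima. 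That candidate is computed identically by both formulas by the first step, since $k\neq i$ and its link LP is non-zero. Hence the maximum is at least $p_{ij}^T$ in both versions. The extra branchless candidate equal to $p_{ij}^T$ therefore never exceeds the maximum and at worst ties with it; the tie only matters for $N$ if the tie-breaking rule prefers index $i$, which I would rule out by fixing a rule that prefers neighbors $k\neq i$ or keeps the previous next hop.

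The last detail is the absence of borrow and underflow. Valid \textasciitilde PL values are at least $(000000)_2$ only under the path-length bound of 63, and that bound keeps every subtraction of $1$ inside the 6-bit field, so the LP bits are never corrupted. Combining the steps, $P^{T\textsl{+1}}$ and $N^{T\textsl{+1}}$ agree under both formulas, which closes the induction.
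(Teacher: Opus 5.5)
\textbf{The gap.} You fix the selection rule to Equation~\eqref{eq:byte-selection-simplified}, which drops $p_{ij}^T$ from the comparison, and then try to show that both updates give the same maximum there. They do not, so Step~3 cannot go through.

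Take $i=j$. The only candidate that can encode path length $0$ is the diagonal one. Equation~\eqref{eq:pl-update-non-branch} gives $\hat p_{iii}^{T+1}=(11\ 111111)_2$, since $\text{PL}[l_{ii}]=0$. Equation~\eqref{eq:pl-update} instead gives $(11\ 111110)_2$. Every candidate with $k\neq i$ either has LP $(00)_2$ or encodes a path of length at least $2$. So under the two-branch update with simplified selection, already $p_{ii}^{1}=(11\ 111110)_2$ for every $i$. The error then spreads: in the next iteration each neighbor's one-hop route to $a_i$ is rebuilt from this degraded cell and becomes two-hop.

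Your witness argument breaks exactly here. For $i=j$, no $k\neq i$ satisfies $\hat p_{ikj}^{T+1}\ge p_{ij}^T$. The witness in Lemma~\ref{lemma:byte-selection-simplification-1} for $i=j$ is $k=j=i$ itself, and its base case $p_{ii}^1=\hat p_{iii}^1$ holds only for the branchless formula. More generally, Appendix~\ref{sec:appendix:best-route-selection-simplification} and the monotonicity you invoke are properties of the branchless run. Using them to control the two-branch run is circular, and monotonicity in fact fails there, because the diagonal decays.

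\textbf{The fix.} The statement is true, and the paper proves it, with respect to Equation~\eqref{eq:byte-selection}. There $p_{ij}^T\ge(00\ 111111)_2$ stays inside the maximum, and each candidate can be handled locally:
\begin{itemize}
\item For $k\neq i$ with nonzero $\odot$ output, a link exists, so both formulas subtract $1$.
\item Candidates with LP $(00)_2$ are at most $(00\ 111111)_2\le p_{ij}^T$ under either formula.
\item For $k=i$, the branchless candidate equals $p_{ij}^T$ or is at most the placeholder. The two-branch candidate is strictly below $p_{ij}^T$ or is the placeholder. Neither can change a maximum that already contains $p_{ij}^T$.
\end{itemize}

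You are right that $k=i$ needs separate treatment. The paper's first-branch sentence infers $\text{PL}[p_{ik}^1]=1$ from $\text{LP}[p_{ik}^1]\neq(00)_2$, which overlooks $l_{ii}$ (LP $(11)_2$ but PL $0$). But the repair comes from $p_{ij}^T$ being inside the max, not from a neighbor witness. With that, you no longer need the LP-$(00)_2$ invariant, the tie-breaking convention, or Appendix~\ref{sec:appendix:best-route-selection-simplification}. Passing to Equation~\eqref{eq:byte-selection-simplified} is justified only afterwards, and only for the branchless update.
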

\begin{proof}
The first branch of Equation~\eqref{eq:pl-update}, where $\text{LP}[\hat{p}_{ikj}^{T\textsl{+1}}] \neq (00)_2$, results in $\text{\textasciitilde PL}[\hat{p}_{ikj}^{T\textsl{+1}}] = \text{\textasciitilde PL}[p_{kj}^T]-1$. In comparison, Equation~\eqref{eq:pl-update-non-branch} only differs in the term subtracted, \ie $\text{PL}[p_{ik}^\textsl{1}]$ in Equation~\eqref{eq:pl-update-non-branch} versus $1$ in Equation~\eqref{eq:pl-update}. Thus, we aim to prove $\text{PL}[p_{ik}^\textsl{1}]=1$ under the condition $\text{LP}[\hat{p}_{ikj}^{T\textsl{+1}}] \neq (00)_2$. This is evident, as according to Equation~\eqref{eq:lp-update} and Table~\ref{tab:lp-update-truth-table}, when $\text{LP}[\hat{p}_{ikj}^{T\textsl{+1}}] \neq (00)_2$, it follows that $\text{LP}[p_{ik}^{\textsl{1}}] \neq (00)_2$, indicating an $a_i$-to-$a_k$ relationship exists. In this case, the $a_i$-to-$a_k$ route would be a one-hop, ensuring $\text{PL}[p_{ik}^\textsl{1}]=1$.

The second branch of Equation~\eqref{eq:pl-update}, where $\text{LP}[\hat{p}_{ikj}^{T\textsl{+1}}] = (00)_2$, results in $\hat{p}_{ikj}^{T\textsl{+1}}$ having an overall value of $(00\ 111111)_2$. Since the best route selection subsequently compares $p_{ij}^{T}$ with $\hat{p}_{ikj}^{T\textsl{+1}}$ to choose the higher value, and given that the $p_{ij}^{T}$ is by definition at least $(00\ 111111)_2$, $\hat{p}_{ikj}^{T\textsl{+1}}$ in this case does not affect the result of $p_{ij}^{T\textsl{+1}}$ at the end of this iteration. Similarly, since the output $\text{\textasciitilde PL}[\hat{p}_{ikj}^{T\textsl{+1}}]$ from Equation~\eqref{eq:pl-update-non-branch} will be either $\text{\textasciitilde PL}[p_{kj}^T]-1$ or $\text{\textasciitilde PL}[p_{kj}^T]$, $\hat{p}_{ikj}^{T\textsl{+1}} \le (00\ 111111)_2$ will hold and will not affect the result of $p_{ij}^{T\textsl{+1}}$ either. Thus, Equations~\eqref{eq:pl-update} and \eqref{eq:pl-update-non-branch} produce the same result under the condition $\text{LP}[\hat{p}_{ikj}^{T\textsl{+1}}] = (00)_2$ as well.
\end{proof}

\section{Simplifying Priority Byte Selection}
\label{sec:appendix:best-route-selection-simplification}

We can effectively eliminate $p_{ij}^{T}$ from the comparison among route priority bytes in each iteration, thereby simplifying Equation~\eqref{eq:byte-selection} to \eqref{eq:byte-selection-simplified}. The rationale behind this simplification is that there is always some $k$ such that $\hat{p}_{ikj}^{T\textsl{+1}} \geq p_{ij}^T$. Here, we provide the proof.

\begin{lemma}\label{lemma:byte-selection-simplification-1}
There exist a certain $T'$ ($0\le T' \le T$) and a certain $k$ ($k \in \{0,\dots,n-1\}$) such that $p_{ij}^T=\hat{p}_{ikj}^{T'}$.
\end{lemma}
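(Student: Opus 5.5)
We argue by induction on $T$, working directly from the recurrence in Equation~\eqref{eq:byte-selection}, i.e.\ the unsimplified selection $p_{ij}^{T\textsl{+1}} = \max\{p_{ij}^T, \hat{p}_{i0j}^{T\textsl{+1}},\dots,\hat{p}_{i(n-1)j}^{T\textsl{+1}}\}$. The simplification to Equation~\eqref{eq:byte-selection-simplified} is exactly what this lemma will later justify, so we must not assume it here.

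To make the base case meaningful, we define $\hat{p}_{ikj}^{0}$ as the result of applying Equations~\eqref{eq:lp-update} and \eqref{eq:pl-update-non-branch} to the pair $(L, P^{\textsl{0}})$ indexed one level down. More concretely, we exhibit the witness $k=j$: the diagonal of $L$ is $(11\ 111111)_2$ with $\text{PL}=0$. From the $\odot$ truth table, $(11)\odot x = x$ whenever $x\in\{(11)_2,(00)_2\}$. Hence composing with the self-loop $l_{ij}\!\to\! l_{jj}$ or $l_{ii}$ reproduces the original byte. Concretely, we show that $\hat{p}_{iij}^{T'}$, computed from $L$'s diagonal entry $l_{ii}$ and $p_{ij}^{T'-1}$, equals $p_{ij}^{T'-1}$ whenever $\text{LP}[p_{ij}^{T'-1}]\in\{(11)_2,(00)_2\}$. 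The other LP values, $(10)_2$ and $(01)_2$, cannot survive composition with $(11)$: by the valley-free rows, $(11)\odot(10)=(11)\odot(01)=(00)$.

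So the self-loop trick fails in general, and the cleaner route is induction. For the base case $T=0$, the claim about $p_{ij}^{0}$ is handled by convention: the diagonal and placeholder values arise from the self-loop $k=i$, which does work, since $p_{ii}^0=(11\,111111)_2$ and, for $i\ne j$, $p_{ij}^0$ is the placeholder. For the inductive step, suppose $p_{ij}^{T}=\hat{p}_{ik'j}^{T'}$ for some $T'\le T$ and some $k'$. In the max defining $p_{ij}^{T\textsl{+1}}$, either some $\hat{p}_{ikj}^{T\textsl{+1}}$ attains the max, and then we take $T'=T+1$ with that $k$; or $p_{ij}^{T}$ attains it, and then $p_{ij}^{T\textsl{+1}}=p_{ij}^T=\hat{p}_{ik'j}^{T'}$ with the same $T'\le T\le T+1$. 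This is a pure bookkeeping induction and closes the statement.

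The only delicate point is the base case: $p_{ij}^{0}$ must itself be expressible as some $\hat{p}_{ikj}^{0}$. I would resolve this by either of two routes. The first is to adopt the convention that $\hat{p}_{ikj}^{0}$ is defined by applying the update to $L$ and $P^{\textsl{0}}$ with $T'=0$ interpreted via the self-loop $k=i$; this gives $l_{ii}\odot p_{ij}^0=p_{ij}^0$, because $p_{ij}^0$ has LP in $\{(11)_2,(00)_2\}$, and $\text{PL}[l_{ii}]=0$ leaves ~PL unchanged. The second is to start the induction at $T=1$, using the fact already noted in the paper that $P^{\textsl{1}}=L=\textproc{Update}(L,P^{\textsl{0}})$, where the max is attained by a genuine $\hat{p}^{\textsl{1}}$ term. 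I expect fixing this base-case convention to be the main obstacle; the inductive step is routine.
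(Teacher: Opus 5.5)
Your inductive step is the paper's argument run forwards (the paper unrolls $p^{T}_{ij}=p^{T-1}_{ij}$ until it reaches $p^{1}_{ij}$), and it is fine. The trouble is the base case. It is the only place the paper's proof does real work, and your proposal never actually verifies it.

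Route 2 asserts that the max defining $p^{1}_{ij}$ is attained by a genuine $\hat p^{1}$ term, citing $P^{1}=L=\textproc{Update}(L,P^{0})$. But the paper merely says this ``can be verified''. Also, $\textproc{Update}$ is specified through Equations~\eqref{eq:lp-update}--\eqref{eq:byte-selection-simplified}, i.e.\ including the very simplification you said you would not assume. So the one substantive step is outsourced.

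Route 1 is a relabelling rather than a verification. $\hat p^{0}$ is undefined in the paper, where $\hat p^{T}_{ikj}$ is built from $p^{1}_{ik}$ and $p^{T-1}_{kj}$. By that indexing, the quantity you compute (the update of $l_{ii}$ with $p^{0}_{ij}$) is the term $\hat p^{1}_{iij}$, which has $T'=1>T=0$. Calling it $\hat p^{0}_{iij}$ is a redefinition, so route 1 proves a relabelled statement rather than the lemma as indexed in the paper.

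The missing check is one line. You had the right witness when you wrote ``$k=j$'', but you then invoked the wrong identity. For $k=j$ the self-loop byte $p^{0}_{jj}=(11\ 111111)_2$ is the \emph{right} operand of $\odot$. So what matters is the $(11)$ column of Table~\ref{tab:lp-update-truth-table}, where $x\odot(11)_2=x$ for all four values of $x$. The row identity $(11)_2\odot x=x$ that you used belongs to the $k=i$ case, which, as you correctly noticed, fails for $x\in\{(10)_2,(01)_2\}$.

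Equation~\eqref{eq:pl-update-non-branch} gives $\text{\textasciitilde PL}[\hat p^{1}_{ijj}]=(111111)_2-\text{PL}[l_{ij}]=\text{\textasciitilde PL}[l_{ij}]$. Together with the column identity, this yields $\hat p^{1}_{ijj}=l_{ij}\ge p^{0}_{ij}$ for all $i,j$. Hence the max defining $p^{1}_{ij}$ in Equation~\eqref{eq:byte-selection} is attained by a $\hat p^{1}$ term, which anchors your induction at $T=1$ exactly as in the paper.

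Your self-loop computation is still worth keeping, correctly labelled. It shows $\hat p^{1}_{iij}=p^{0}_{ij}$, which gives the subsequent theorem at $T=0$ directly. The paper's reduction bottoms out at $p^{1}_{ij}$, so it does not explicitly cover that case.
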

\begin{proof}
According to Equation~\eqref{eq:byte-selection}, $p_{ij}^T$ either equals $max\{ \hat{p}_{ikj}^{T} \}_{k=0}^{n\textsl{-1}}$ or $p_{ij}^{T\textsl{-1}}$. If $p_{ij}^T=max\{ \hat{p}_{ikj}^{T} \}_{k=0}^{n\textsl{-1}}$, apparently there exists a certain $k$ when $T'=T$ such that $p_{ij}^T=\hat{p}_{ikj}^{T'}=max\{ \hat{p}_{ikj}^{T} \}_{k=0}^{n\textsl{-1}}$. If $p_{ij}^T=p_{ij}^{T\textsl{-1}}$, the problem reduces to finding a certain $T'$ ($0\le T' \le T-1$) and a certain $k$ ($k \in \{0,\dots,n-1\}$) such that $p_{ij}^{T\textsl{-1}}=\hat{p}_{ikj}^{T'}$. We repeat this reduction recursively until the problem reduces to finding a certain $T'$ ($0\le T' \le 1$) and a certain $k$ ($k \in \{0,\dots,n-1\}$) such that $p_{ij}^{\textsl{1}}=\hat{p}_{ikj}^{T'}$. Let $T'=1$ and $k=j$, and then we need prove $p_{ij}^{\textsl{1}}=\hat{p}_{ijj}^{\textsl{1}}=max\{ \hat{p}_{ikj}^{1} \}_{k=0}^{n\textsl{-1}}$. Note that $\hat{p}_{ikj}^{\textsl{1}}$ is computed based on $p_{ik}^{\textsl{1}}$ and $p_{kj}^{\textsl{0}}$. By definition, $p_{kj}^{\textsl{0}}=(11\ 111111)_2$ when $k=j$, otherwise $p_{kj}^{\textsl{0}}=(00\ 111111)_2$. Then, according to the definition of $p_{ij}^{\textsl{1}}$ (see Equation~\eqref{eq:L-definition}) and the computation of $\hat{p}_{ikj}^{1}$ (see Equations~\eqref{eq:lp-update} to \eqref{eq:pl-update-non-branch}), it follows that $p_{ij}^{\textsl{1}} = \hat{p}_{ijj}^{\textsl{1}}\ge (00\ 111111)$, while all $\hat{p}_{ikj}^{\textsl{1}}=(00\ 111111)$ when $k\neq j$. Thus, we have proved the reduced problem, and consequently, the original problem is also proved.
\end{proof}

\begin{lemma}\label{lemma:byte-selection-simplification-2}
$\hat{p}_{ikj}^{T'} \le \hat{p}_{ikj}^{T}$ when $T' < T$.
\end{lemma}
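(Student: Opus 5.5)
The plan is to reduce the claim to monotonicity of the state bytes, $p_{kj}^{T'} \le p_{kj}^{T}$ for $T'<T$, and then show that the per-hop map $p_{kj} \mapsto \hat{p}_{ikj}$ is monotone in its second argument for fixed $p_{ik}^{1}$. Recall that $\hat{p}_{ikj}^{T}$ is computed from the constant byte $p_{ik}^{1}$ (the entry $l_{ik}$ of $L$) and $p_{kj}^{T-1}$ via Equations~\eqref{eq:lp-update} and \eqref{eq:pl-update-non-branch}. Once both facts hold, composing them gives $\hat{p}_{ikj}^{T'} = f_{ik}(p_{kj}^{T'-1}) \le f_{ik}(p_{kj}^{T-1}) = \hat{p}_{ikj}^{T}$, where $f_{ik}$ denotes this per-hop map.

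\textbf{Step 1: monotonicity of the state.} Under the update rule of Equation~\eqref{eq:byte-selection}, $p_{ij}^{T+1}$ is a maximum over a set that contains $p_{ij}^{T}$. Hence $p_{ij}^{T} \le p_{ij}^{T+1}$ directly for every $i,j$ and $T\ge 0$, and induction gives $p_{kj}^{T'-1} \le p_{kj}^{T-1}$. The lemma is being used to justify replacing Equation~\eqref{eq:byte-selection} by \eqref{eq:byte-selection-simplified}, so the sequence under discussion is the one defined by Equation~\eqref{eq:byte-selection}. This keeps the argument non-circular: monotonicity comes for free from the $\max$ containing $p_{ij}^{T}$.

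\textbf{Step 2: monotonicity of $f_{ik}$.} Fix $l=p_{ik}^{1}$ and take two bytes $x\le y$ in the role of $p_{kj}$. I would split into cases on $\mathrm{LP}[l]$.

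\emph{Case $\mathrm{LP}[l]=00$.} Both outputs have LP $00$ and are at most $(00\,111111)_2$. With the branchless formula they equal $\mathrm{\sim PL}[x]-\mathrm{PL}[l]$ and $\mathrm{\sim PL}[y]-\mathrm{PL}[l]$. I will compare them directly, noting that every state byte has LP $\ne 00$ or equals the placeholder.

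\emph{Case $\mathrm{LP}[l]\ne 00$.} Here $\mathrm{PL}[l]=1$ (or $0$ when $i=k$, which merely copies the byte). By Table~\ref{tab:lp-update-truth-table}, the output LP is a monotone function of $\mathrm{LP}[x]$: it is $00$ for inputs below a threshold and $\mathrm{LP}[l]$ otherwise. The thresholds are: any nonzero input for a C2P link, and only $11$ for P2C and P2P links. If both inputs pass the threshold, the outputs share the same LP. Their $\mathrm{\sim PL}$ fields are $\mathrm{\sim PL}[x]-1$ versus $\mathrm{\sim PL}[y]-1$. Order is preserved only when $\mathrm{LP}[x]=\mathrm{LP}[y]$; otherwise $\mathrm{LP}[x]<\mathrm{LP}[y]$, and the collapsed LP could reverse the order through path lengths.

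\textbf{Main obstacle.} The difficulty is exactly this collapse. A C2P link maps LP $10$ and LP $11$ to the same output LP $01$. So if $p_{kj}$ upgrades over time from a peer route to a longer customer route, $\hat{p}_{ikj}$ could in principle decrease in the PL field. To handle it, I would first try to prove an invariant that rules the upgrade out. I would show by induction on $T$ that under iteration in BFS order, the first time $p_{kj}$ becomes non-placeholder its LP is already final. The reason is that a customer route of length $m$ appears at iteration $m$, and at that iteration a peer or provider route of length $m$ could tie with it. The hope is that Gao-Rexford propagation makes customer routes appear no later than the others. If this cannot be established, I would fall back on a weaker claim. Within a fixed LP class, $\mathrm{\sim PL}$ is nonincreasing in $T$, since shorter paths are discovered first in the hop-by-hop iteration. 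Then $p_{kj}$ changes at most across LP classes, and I would argue that when it does, the downstream byte is also dominated at the later iteration, possibly via a different $k$. Only the weaker per-index statement would then be proved, namely the one Lemma~\ref{lemma:byte-selection-simplification-1} actually needs. That is the step I expect to require the most care.
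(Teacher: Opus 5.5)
Your Step~1 and the within-class part of Step~2 are the same as the paper's argument: monotonicity of $p_{kj}^{T}$ from the $\max$ in Equation~\eqref{eq:byte-selection}, then monotonicity of the output LP via Table~\ref{tab:lp-update-truth-table}. You have also located the right spot, namely the collapse in row $01$. That collapse is not an obstacle you can manage, though. It is a counterexample, so neither your invariant nor your fallback can be proved.

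Take the following topology:
\begin{itemize}
\item $a_q$ and $b_4$ are providers of $a_j$;
\item $a_k$ is a provider of $b_1$, $b_1$ of $b_2$, $b_2$ of $b_3$, and $b_3$ of $b_4$;
\item $a_k$ and $a_q$ are peers;
\item $a_i$ is a customer of $a_k$ with no other neighbour.
\end{itemize}
Routes are discovered one hop per iteration whatever their class. So $p_{kj}^{2}=(10\,111101)_2$ (the peer route via $a_q$) and $p_{kj}^{5}=(11\,111010)_2$ (the customer route via $b_1$). Row $01$ sends both to LP $01$, which gives
\[
\hat{p}_{ikj}^{3}=(01\,111100)_2>(01\,111001)_2=\hat{p}_{ikj}^{6}.
\]
This refutes your invariant: the first non-placeholder LP at $a_k$ is $10$, not its final $11$. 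Gao--Rexford preference lets customer routes win once they arrive, not arrive first.

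It also defeats the fallback ``via a different $k$''. Here $a_k$ is $a_i$'s only neighbour, the self term $l_{ii}$ sends an LP-$01$ byte to LP $00$, and every other $\hat{p}_{ik'j}^{6}$ is at most the placeholder. Hence $\max_{k'}\hat{p}_{ik'j}^{6}<p_{ij}^{5}=(01\,111100)_2$. So the Theorem this lemma serves fails as well, and with it the claimed equivalence of \eqref{eq:byte-selection} and \eqref{eq:byte-selection-simplified}. Under \eqref{eq:byte-selection}, $a_i$ keeps a stale length-$3$ route built on a peer route that $a_k$ has abandoned. Under \eqref{eq:byte-selection-simplified}, $a_i$ gets the Gao--Rexford-consistent length-$6$ route.

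The paper's own proof has exactly the hole you found. When the output LP fields agree, it treats only $p_{kj}^{T}=p_{kj}^{T'}$ and the case where both outputs are $00$, and then concludes equality. It skips distinct input classes that row $01$ maps to the same nonzero LP. It also skips the harmless same-class, shorter-path case, where only $\le$ holds.

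What you can salvage, using the two-branch rule \eqref{eq:pl-update}, is the following:
\begin{itemize}
\item monotonicity holds whenever $\mathrm{LP}[p_{ik}^{1}]\neq 01$;
\item for $\mathrm{LP}[p_{ik}^{1}]=01$, it holds only as long as $p_{kj}$ does not move between two nonzero LP classes.
\end{itemize}
With \eqref{eq:pl-update-non-branch} instead, the LP-$00$ outputs carry $\text{\textasciitilde PL}[p_{kj}]-\mathrm{PL}[p_{ik}^{1}]$ and are not monotone either. So your Case $00$ as written would also fail, although only below the placeholder. No amount of care will give you the lemma as stated.
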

\begin{proof}
$\hat{p}_{ikj}^{T}$ is computed based on $p_{ik}^{\textsl{1}}$ and $p_{kj}^{T}$. According to Table~\ref{tab:lp-update-truth-table}, when $p_{ik}^{\textsl{1}}$ is fixed, the LP field of $\hat{p}_{ikj}^{T}$ either increases or remains unchanged as $p_{kj}^T$ increases. Since, according to Equation~\eqref{eq:byte-selection}, $p_{kj}^{T} \ge p_{kj}^{T'}$ when $T > T'$, it follows that $\text{LP}[\hat{p}_{ikj}^{T}] \ge \text{LP}[\hat{p}_{ikj}^{T'}]$ when $T > T'$. If $\text{LP}[\hat{p}_{ikj}^{T}] > \text{LP}[\hat{p}_{ikj}^{T'}]$, it is apparent that $\hat{p}_{ikj}^{T'} \le \hat{p}_{ikj}^{T}$, as the LP field represents the two most significant bits of the priority byte. If $\text{LP}[\hat{p}_{ikj}^{T}] = \text{LP}[\hat{p}_{ikj}^{T'}]$, there are two cases to consider: First, if $p_{kj}^{T} = p_{kj}^{T'}$, then $p_{ikj}^{T} = p_{ikj}^{T'}$. Second, if $p_{kj}^{T} > p_{kj}^{T'}$ but both $\text{LP}[p_{ik}^\textsl{1}] \odot \text{LP}[p_{kj}^T]$ and $\text{LP}[p_{ik}^\textsl{1}] \odot \text{LP}[p_{kj}^{T'}]$ result in $(00)_2$, then according to Equation~\eqref{eq:pl-update}, both $p_{ikj}^{T'}$ and $p_{ikj}^{T}$ equal $(00\ 111111)_2$. Therefore, $p_{ikj}^{T} = p_{ikj}^{T'}$ is ensured when $\text{LP}[\hat{p}_{ikj}^{T}] = \text{LP}[\hat{p}_{ikj}^{T'}]$. In all cases, $\hat{p}_{ikj}^{T'} \le \hat{p}_{ikj}^{T}$ when $T' < T$.
\end{proof}

\begin{theorem}
There exists a certain $k$ such that $\hat{p}_{ikj}^{T\textsl{+1}} \ge p_{ij}^T$.
\end{theorem}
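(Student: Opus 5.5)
The plan is to chain the two preceding lemmas. First, by Lemma~\ref{lemma:byte-selection-simplification-1} applied to $p_{ij}^T$, we obtain some $T'$ with $0\le T'\le T$ and some index $k^\ast\in\{0,\dots,n-1\}$ such that $p_{ij}^T=\hat{p}_{ik^\ast j}^{T'}$. This fixes the candidate $k$ for the theorem: we will show that $k=k^\ast$ works.

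Second, since $T'\le T<T+1$, we have $T'<T+1$. Lemma~\ref{lemma:byte-selection-simplification-2}, applied with the same fixed triple $(i,k^\ast,j)$ and the pair of iteration indices $T'<T+1$, then gives $\hat{p}_{ik^\ast j}^{T'}\le \hat{p}_{ik^\ast j}^{T+1}$. Combining the two steps yields
\begin{equation*}
p_{ij}^T=\hat{p}_{ik^\ast j}^{T'}\le \hat{p}_{ik^\ast j}^{T+1},
\end{equation*}
which is exactly the claim with $k=k^\ast$.

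The only delicate point is the edge case $T=0$, where Lemma~\ref{lemma:byte-selection-simplification-1} is stated for indices starting at $T'=0$, but $\hat{p}^{0}$ is not really defined by the recurrence. I would handle $T=0$ directly instead. Take $k=j$ when $i\neq j$. From $P^0$ and Equation~\eqref{eq:L-definition}, $\hat{p}_{ijj}^{1}=l_{ij}\ge(00\ 111111)_2=p_{ij}^0$. For $i=j$, take $k=i$: both $L$ and $P^0$ give $(11\ 111111)_2$ on the diagonal, and the $\odot$ table with the self-route yields $\hat{p}_{iii}^{1}=(11\ 111111)_2=p_{ii}^0$. For $T\ge1$, the recursion in Lemma~\ref{lemma:byte-selection-simplification-1} terminates at $T'\ge1$, so Lemma~\ref{lemma:byte-selection-simplification-2} applies without issue. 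I expect this bookkeeping at the base case to be the main obstacle. The rest is a direct composition of the two lemmas, relying only on the monotonicity $p_{kj}^{T+1}\ge p_{kj}^{T'}$ from Equation~\eqref{eq:byte-selection}, which the second lemma already encapsulates.
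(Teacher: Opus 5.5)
Your argument is the paper's own proof: Lemma~\ref{lemma:byte-selection-simplification-1} gives $T'\le T$ and $k$ with $p_{ij}^T=\hat p_{ikj}^{T'}$, and Lemma~\ref{lemma:byte-selection-simplification-2} then passes to $T+1$. Your direct treatment of $T=0$ correctly patches a point the paper glosses over, since $\hat p^{0}$ is undefined and the paper's recursion only bottoms out at $T'=1$. One caveat there: your diagonal case relies on the branchless rule \eqref{eq:pl-update-non-branch} with $\text{PL}[l_{ii}]=0$, because under \eqref{eq:pl-update} the self-route would come out as $(11\ 111110)_2$.

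\textbf{The gap.} The genuine gap is the step you describe as relying only on the monotonicity $p_{kj}^{T+1}\ge p_{kj}^{T'}$. Monotonicity of $p_{kj}$ in byte order does not transfer to $\hat p_{ikj}$. When $a_k$ is a provider of $a_i$ (row $(01)_2$ of the $\odot$ table), every nonzero LP is mapped to $(01)_2$. The received bytes are then compared by \textasciitilde PL alone, and a route that $a_k$ ranks higher because of its LP can be longer. The proof of Lemma~\ref{lemma:byte-selection-simplification-2} omits exactly this subcase (equal, nonzero output LP with $p_{kj}^T>p_{kj}^{T'}$), and the lemma is false there.

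\textbf{Counterexample.} Take four ASes with no links other than these: $a_k$ is a provider of $a_i$ and of $a_m$, $a_m$ is a provider of $a_j$, and $a_k$ peers with $a_j$. Then $p_{kj}^1=(10\ 111110)_2$ and $p_{kj}^2=(11\ 111101)_2$, since the customer route via $a_m$ beats the direct peer route. Hence $\hat p_{ikj}^2=(01\ 111101)_2$ but $\hat p_{ikj}^3=(01\ 111100)_2$. Moreover $p_{ij}^2=(01\ 111101)_2$, while every $\hat p_{ik'j}^3$ with $k'\neq k$ has LP $(00)_2$. So $\max_{k'}\hat p_{ik'j}^3<p_{ij}^2$, that is, the statement itself fails at $T=2$. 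This holds whether you run \eqref{eq:byte-selection} or \eqref{eq:byte-selection-simplified} dynamics, so no arrangement of the two lemmas can prove it without restricting the topology.

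\textbf{What the example shows.} Once $a_k$ switches to the customer route, real BGP implicitly withdraws the old route. The correct path for $a_i$ is therefore $a_i a_k a_m a_j$ of length 3, which is what \eqref{eq:byte-selection-simplified} computes; \eqref{eq:byte-selection} instead keeps a stale length-2 byte. The simplification should be justified as the BGP-faithful update in its own right, not as an equivalence with \eqref{eq:byte-selection}.
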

\begin{proof}
According to Lemma~\ref{lemma:byte-selection-simplification-1}, there exist a certain $T'$ ($0\le T' \le T$) and a certain $k$ ($k \in \{0,\dots,n-1\}$) such that $p_{ij}^T=\hat{p}_{ikj}^{T'}$. Then, according to Lemma~\ref{lemma:byte-selection-simplification-2}, $\hat{p}_{ikj}^{T'} \le \hat{p}_{ikj}^{T+1}$ since $T' < T+1$. So $\hat{p}_{ikj}^{T\textsl{+1}} \ge p_{ij}^T$ with this $k$.
\end{proof}

\section{Restoring Routes from Next-Hop Matrix}
\label{sec:appendix:restoring-route-from-next-hop}

Here, we describe the process of restoring the complete path of a route between two ASes using the state matrix $P^T$ and the next-hop matrix $N^T$. Algorithm~\ref{alg:restore-path} outlines the steps to achieve this. The function \textproc{RestorePath} takes as inputs the vantage point AS $a_i$, the origin AS $a_j$, the state matrix $P^T$, and the next-hop matrix $N^T$. It initializes the next-hop index with $i$ (see Line 2) and an empty list to store the path (see Line 3). The algorithm proceeds in a loop, continuously updating the next-hop AS until it reaches the origin AS $a_j$. If at any point the state matrix indicates that the origin is unreachable (see Line 5), the function returns \texttt{None}. Otherwise, it updates the next-hop using the next-hop matrix $N^T$ (see Line 7) and appends the corresponding AS number to the path (see Line 8). The process repeats until the origin AS is reached, at which point the complete path is returned.

Algorithm~\ref{alg:restore-path} ensures that we can restore the AS path for any given $(i,j)$ pair based on the information stored in the state and next-hop matrices, providing a systematic approach to generate the complete route set from the matrix-based results.

\begin{algorithm}[ht]
    \caption{Restoring Path from Next-Hop Matrix}
    \label{alg:restore-path}
    \begin{spacing}{1.2}
    \begin{algorithmic}[1]
        \Function{RestorePath}{$a_i$, $a_j$, $P^T$, $N^T$}
            \State $next\_hop = i$
            \State $path = [\ ]$
            \While{$next\_hop \neq j$}
                \If{$P^T[next\_hop,\ j] \leq 0\text{b}00111111$}
                    \State \Return None
                \EndIf
                \State $next\_hop = N^T[next\_hop,\ j]$
                \State $\textproc{Append}(path,\ a_{next\_hop})$
            \EndWhile
            \State \Return $path$
        \EndFunction
    \end{algorithmic}
    \end{spacing}
\end{algorithm}

\section{Community Engagement}
We preliminarily shared our work with APNIC experts, who raised concerns about the pratical significance of real-world incidents. In response to this feedback, we carried out the empirical study presented in Section~\ref{sec:empirical-study}, providing quantitative evidence of the prevalence and impact of stealthy hijacking in today’s Internet. Additionally, we are actively promoting an informational Internet Draft\footnote{https://datatracker.ietf.org/doc/draft-li-sidrops-stealthy-hijacking/} that formalizes the mechanism and properties of stealthy hijacking and aims to raise community awareness of this threat.

\section{Additional Technical Reports}
\label{sec:appendix:additional-reports}

We provide extended technical details covering algorithms, implementation notes, and in-depth analyses in standalone technical reports available via external links. Readers interested in these topics may refer to:
\begin{itemize}
    \item Report on \emph{\href{https://github.com/yhchen-tsinghua/stealthy-bgp-hijacking/blob/main/docs/stealthy-hijacking-discovery-as-a-service.pdf}{Stealthy Hijacking Discovery as a Service}}
    \item Report on \emph{\href{https://github.com/yhchen-tsinghua/stealthy-bgp-hijacking/blob/main/docs/matrix-based-route-priority-update.pdf}{Matrix-Based Route Priority Update}}
    \item Report on \emph{\href{https://github.com/yhchen-tsinghua/stealthy-bgp-hijacking/blob/main/docs/time-space-tradeoff-in-matrix-update.pdf}{Time-Space Tradeoff in Matrix Update}}
\end{itemize}

\clearpage

\section{Artifact Appendix}

\subsection{Description \& Requirements}

This artifact supports the paper \textit{Understanding the Stealthy BGP Hijacking Risk in the ROV Era}, and enables full reproduction of all experiments and results presented therein. The artifact includes three parts: (i) an empirical study based on real-world BGP data (presented in \S\ref{sec:empirical-study} of the paper), (ii) an analytical study through matrix-based BGP route inference (presented in \S\ref{sec:risk-assessment} of the paper), and (iii) a performance evaluation of the proposed analytical framework (presented in \S\ref{sec:performance-evaluation} of the paper). Additionally, the implementation of our matrix-based BGP route inference algorithm is encapsulated in a Python package named \texttt{matrix-bgpsim}\footnote{Stable versions of the package are officially released on \url{https://pypi.org/project/matrix-bgpsim/} and can be installed via \texttt{pip}.}, and our service in production is available at \url{https://yhchen.cn/stealthy-bgp-hijacking}.

Overall, the artifact contains following components:
\begin{itemize}
    \item Source code and scripts for each part of the experiment.
    \item Pre-processed datasets (\eg inferred matrices).
    \item Pre-computed results and cache files for boosting long-running steps.
    \item Docker and Conda configurations for environment setup.
    \item Documentation and README for usage instructions.
\end{itemize}

To facilitate reproduction and reduce platform dependency, we provide a fully-configured cloud-based evaluation platform (access credentials available via HotCRP). Alternatively, users can run the artifact locally via Docker or by manually setting up the environment with the README instructions.

\noindent
\textbf{How to access.} DOI \href{https://doi.org/10.5281/zenodo.16565359}{10.5281/zenodo.16565359} or \href{https://github.com/yhchen-tsinghua/stealthy-bgp-hijacking}{GitHub}.

\noindent
\textbf{Hardware dependencies.} The artifact requires the following resources to complete all experiments:

\begin{itemize}
    \item At least 120~GB of system free memory,
    \item At least 60~GB of available disk space,
    \item Internet access for data downloads, and
    \item Nvidia GPU and CUDA support for full benchmarking.
\end{itemize}

\noindent
\textbf{Software dependencies.} The artifact provides two setups:
\begin{itemize}
    \item Docker-based: A pre-built Docker image is provided. The user only needs the Docker tool suite installed.
    \item Manual setup: Detailed scripts are included to set up required packages and tools. These include:
    \begin{itemize}
        \item Conda environment with Python 3.10 and dependencies listed in ``environment.yml'',
        \item Node.js for the frontend display, and
        \item C/C++ toolchains for building third-party tools, including
            \texttt{bgpdump} and \texttt{bgpsim}.
    \end{itemize}
\end{itemize}

\noindent
\textbf{Benchmarks.} The artifact relies on several data sources and benchmark components:
\begin{itemize}
    \item BGP routing data: RIB snapshots from RouteViews collectors \emph{wide}, \emph{amsix}, and \emph{route-views2} since January 1, 2025, and discovered incidents from a two-month period (January to February 2025) of our deployed service.
    \item Intermediate matrices: Pre-computed Internet-scale BGP route matrices, derived from CAIDA AS relationship datasets as of 2025/01/01. These are required for the analytical and performance studies.
    \item Synthetic topologies: Generated using a sampled Internet AS graph (10,000 ASes) for benchmarking BGP route inference performance.
\end{itemize}

\subsection{Artifact Installation \& Configuration}

The artifact supports three installation modes:

\noindent
\textbf{Cloud Platform.} A fully pre-configured cloud environment is provided for evaluation purposes. Users can simply SSH into the platform (credentials provided via HotCRP) and launch the evaluation environment using a one-line startup script. No additional setup is required.

\noindent
\textbf{Docker-based.} The artifact includes Docker Compose configurations. Users only need to install the Docker tool suite and download the provided container image and dataset archive. BASH scripts are provided to launch the evaluation environment with all components properly mounted and configured.

\noindent
\textbf{Manual Setup.} This involves installing required system packages, Python dependencies (via Conda), and third-party tools such as \texttt{bgpdump} and \texttt{bgpsim}. Detailed instructions and setup scripts are included in the repository.

\subsection{Experiment Workflow}

The experiment has three main parts that correspond to the core sections of the paper: an empirical study (\S\ref{sec:empirical-study}), an analytical study (\S\ref{sec:risk-assessment}), and performance evaluation (\S\ref{sec:performance-evaluation}). Each part is self-contained and run via a corresponding ``run.sh''

\noindent
\textbf{Empirical Study.} This part runs a backend routine to analyze real-world BGP routing data and discover stealthy hijacking incidents. Then, it characterizes the discovered incidents and reproduces all relevant figures and tables in \S\ref{sec:empirical-study}. Finally, it sets up a frontend service locally for interactive display of the discovered incidents. A production version of this service is also available at \url{https://yhchen.cn/stealthy-bgp-hijacking}.

\noindent
\textbf{Analytical Study.} This part first uses a matrix-based approach to infer BGP routes and quantify the stealthy hijacking risk under partial ROV deployment. It stores the analysis results in intermediate matrices on the disk, and then reproduces all figures and tables in \S\ref{sec:risk-assessment}.

\noindent
\textbf{Performance Evaluation.} This part benchmarks the runtime of our approach against existing tools on synthetic Internet topologies. It then validates the analytical results against the empirical results to evaluate the accuracy of our analytical framework. It further evaluates the accuracy of our analytical framework under ablation of input data sources and statistically analyzes how robust our analytical framework is against mislabels in input data sources. It finally reproduces all figures and tables in \S\ref{sec:performance-evaluation}.

\subsection{Major Claims}

The major claims supported by our artifact are as follows:
\begin{itemize}
    \item (C1) Stealthy hijacking in the wild is mostly short-lived and targets sub-prefixes, with new cases emerging almost daily and some persisting long-term. Its exposure is sensitive to vantage point selection.
    \item (C2) The current partial ROV deployment significantly amplifies stealthy hijacking risk from 0 to a 14.1\% overall success probability.
    \item (C3) Targeted stealthy hijacking achieves near-certain success on specific AS pairs (up to 99.5\%).
    \item (C4) Stealthy hijacking risk mostly opposes the overall risk trend across ASes but is eventually suppressed as ROV’s restrictions on attackers prevail.
    \item (C5) ASes most effective in launching stealthy hijacking are in Europe, South America, and North America.
    \item (C6) Cumulative AS hegemony shows the strongest quadratic correlation with stealthy hijacking risk.
    \item (C7) A small fraction of risk-critical and ROV-enabled ASes account for the majority of stealthy hijacking risk.
    \item (C8) Validation on real-world datasets shows up to 95.9\% incident-level accuracy of our analytical framework.
    \item (C9) Integrating multiple reliable sources to obtain a more complete view of ROV deployment is crucial to achieve accurate stealthy hijacking risk assessment.
    \item (C10) Our matrix-based route inference achieves a 500-fold speedup against existing baseline methods.
\end{itemize}

\subsection{Evaluation}

\noindent
\textbf{Experiment (E1)}
[Empirical Study]: discover stealthy hijacking incidents with real-world routing data, reproduce all figures and tables in \S\ref{sec:empirical-study} to support C1, and additionally set up a frontend service to display discovered incidents interactively.

\noindent
[Preparation]
No extra preparation is needed.

\noindent
[Execution]
Running ``empirical-study/run.sh'' is all you need. This script will execute the following steps:
\begin{enumerate}
    \item Run the backend routine that discovers stealthy BGP hijacking incidents using RouteViews RIBs from collectors \emph{wide}, \emph{amsix}, and \emph{route-views2}, each captured at 12:00 January 1, 2025. This is a scaled-down demo for one-day discovery. In actual deployment, we register a cron-job to call this backend routine daily.
    \item Reproduce all figures and tables in \S\ref{sec:empirical-study} to support C1. It uses all incidents and alarms captured in the first two months of year 2025, which are preserved in the artifact beforehand and are exactly a snapshot of the results by 2025/07/11 from our service in production at \url{https://yhchen.cn/stealthy-bgp-hijacking}.
    \item Set up a frontend service to display discovered incidents.
\end{enumerate}

\noindent
[Results]
The discovered incidents will be saved to ``empirical-study/results'', including one JSON file for alarms and one JSON file for incidents. Figure~\ref{fig:incidents-breakdown}-\ref{fig:vp-distribution} in PDF format and Table~\ref{tab:incident-impact} in JSON format are reproduced under the same result directory, which support C1. Once the frontend service is up, it can be accessed at \url{http://localhost:3000/} using a browser.

\vspace{2mm}
\noindent
\textbf{Experiment (E2)}
[Analytical Study]: perform matrix-based BGP route inference using CAIDA AS relationship data, analyze the stealthy BGP hijacking risk, and reproduce all figures and tables in \S\ref{sec:risk-assessment} to support C2-C7.

\noindent
[Preparation]
No extra preparation is needed.

\noindent
[Execution]
Running ``analytical-study/run.sh'' is all you need. This script will execute the following steps:
\begin{enumerate}
    \item Run a risk analysis process that infers complete BGP routes on the benign reach and the malicious reach. It then characterizes stealthy hijacking risk based on these inferred routes and stores the results in intermediate matrices on the disk. It further computes various topological features on each AS and cache the results on the disk.
    \item Reproduce all figures and tables in \S\ref{sec:risk-assessment} using the intermediate matrices and AS features generated in the previous step.
\end{enumerate}

\noindent
[Results]
The intermediate results including matrices and AS features will be saved under ``analytical-study/data/matrices/'' in the format of compressed pickle objects and CSV files. A JSON file including statistics of routes, Figure~\ref{fig:rov-measurements}-\ref{fig:risk-attribution} in PDF format, and Table~\ref{tab:risk-dissection} in Tex format are reproduced under ``analytical-study/results'', which support C2-C7.

\vspace{2mm}
\noindent
\textbf{Experiment (E3)}
[Performance Evaluation]: benchmark the runtime of \texttt{matrix-bgpsim} and the baseline \texttt{bgpsim}, evaluate accuracy and robustness of our analytical framework, and reproduce all figures and tables in \S\ref{sec:performance-evaluation} to support C8-C10.

\noindent
[Preparation]
No extra preparation is needed.

\noindent
[Execution]
Running ``performance-evaluation/run.sh'' is all you need. This script will execute the following steps:
\begin{enumerate}
    \item Benchmark the runtime of our \texttt{matrix-bgpsim} and the baseline \texttt{bgpsim}. This first generates a sampled Internet topology that contains 10,000 ASes, based on CAIDA serial-2 AS relationship dataset on 2025/01/01. This process starts with Tier-1 mesh and progressively adds new ASes that connects to the existing topology, until the number of ASes reaches 10,000. Then, it tests how long \texttt{matrix-bgpsim} and \texttt{bgpsim} generates all routes between any random 10 to 100 ASes, respectively, on the aforementioned sampled topology. \texttt{matrix-bgpsim} is also tested under varying number of CPU processes and GPU. 
    \item Use the results from of the previous step and the discovered real-world incidents to evaluate the performance of our framework in terms of accuracy, robustness, and efficiency, and reproduce all figures and tables in \S\ref{sec:performance-evaluation}.
\end{enumerate}

\noindent
[Results]
The benchmark results are stored in CSV format under ``performance-evaluation/.cache''. Figure~\ref{fig:accuracy-threshold}-\ref{fig:runtime-performance} in PDF format and Table~\ref{tab:rov-ablation} in Tex format are created under ``performance-evaluation/results'', which support C8-C10.

\end{document}